\newtheorem{theorem}{Theorem}[section]
\newtheorem{lemma}[theorem]{Lemma}
\DeclareMathOperator*{\argmax}{arg\,max}
\DeclareMathOperator*{\argmin}{arg\,min}
\DeclarePairedDelimiter{\ceil}{\lceil}{\rceil}
\DeclarePairedDelimiter\floor{\lfloor}{\rfloor}
\title{Conformal Contextual Robust Optimization}
\author{
  Yash Patel, Sahana Rayan, Ambuj Tewari \\
  Department of Statistics \\
  University of Michigan \\
  \texttt{\{yppatel,srayan,tewaria\}@umich.edu} \\
}
\author{
  Yash Patel \\
  Department of Statistics \\
  University of Michigan \\
  \texttt{yppatel@umich.edu} \\
   \And
  Sahana Rayan \\
  Department of Statistics \\
  University of Michigan \\
  \texttt{srayan@umich.edu} \\
   \And
  Ambuj Tewari \\
  Department of Statistics \\
  University of Michigan \\
  \texttt{tewaria@umich.edu} \\
}
\begin{document}
\maketitle

%

%





\begin{abstract}
  Data-driven approaches to predict-then-optimize decision-making problems seek to mitigate the risk of uncertainty region misspecification in safety-critical settings. Current approaches, however, suffer from considering overly conservative uncertainty regions, often resulting in suboptimal decision-making. To this end, we propose Conformal-Predict-Then-Optimize (CPO), a framework for leveraging highly informative, nonconvex conformal prediction regions over high-dimensional spaces based on conditional generative models, which have the desired distribution-free coverage guarantees. Despite guaranteeing robustness, such black-box optimization procedures alone inspire little confidence owing to the lack of explanation of why a particular decision was found to be optimal. We, therefore, augment CPO to additionally provide semantically meaningful visual summaries of the uncertainty regions to give qualitative intuition for the optimal decision.
We highlight the CPO framework by demonstrating results on a suite of simulation-based inference benchmark tasks and a vehicle routing task based on probabilistic weather prediction. 
\end{abstract}

\section{INTRODUCTION}\label{section:intro}
Predict-then-optimize or contextual robust optimization problems are of long-standing interest in safety-critical settings where decision-making happens under uncertainty \citep{sun2023predict, elmachtoub2022smart, elmachtoub2020decision, pervsak2023contextual}. In traditional robust optimization, results are made to be robust to distributions anticipated to be present upon deployment \citep{ben2009robust, beyer2007robust}. Since such decisions are sensitive to proper model specification, recent efforts have sought to supplant this with data-driven uncertainty regions \citep{cheramin2021data, bertsimas2018data, shang2019data, johnstone2021conformal}. 

Model misspecification is ever more present in \textit{contextual} robust optimization, spurring efforts to define similar data-driven uncertainty regions \citep{ohmori2021predictive, chenreddy2022data, sun2023predict}. Such methods, however, focus on box- and ellipsoid-based uncertainty regions, both of which are necessarily convex and often overly conservative, resulting in suboptimal decision-making. 

Conformal prediction provides a principled framework for producing distribution-free prediction regions with marginal frequentist coverage guarantees \citep{angelopoulos2021gentle, shafer2008tutorial}. By using conformal prediction on a user-defined score function $s(x,y)$ and obtaining an empirical $1-\alpha$ quantile $\widehat{q}(\alpha)$ of $s(x,y)$ over a calibration set $\mathcal{D}_{\mathcal{C}}$, prediction regions $\mathcal{C}(x) = \{y \mid s(x, y) \le \widehat{q}(\alpha)\}$ attain marginal coverage guarantees. Such prediction regions, however, are notably defined \textit{implicitly}. For simple scores, such as residuals, an explicit expression of such regions can be written, making these the most common approaches used in practice \citep{tumu2023physics,horwitz2022conffusion,angelopoulos2022image,hu2022robust,mao2022valid}. 

\begin{figure*}
  \centering 
  \includegraphics[scale=0.34]{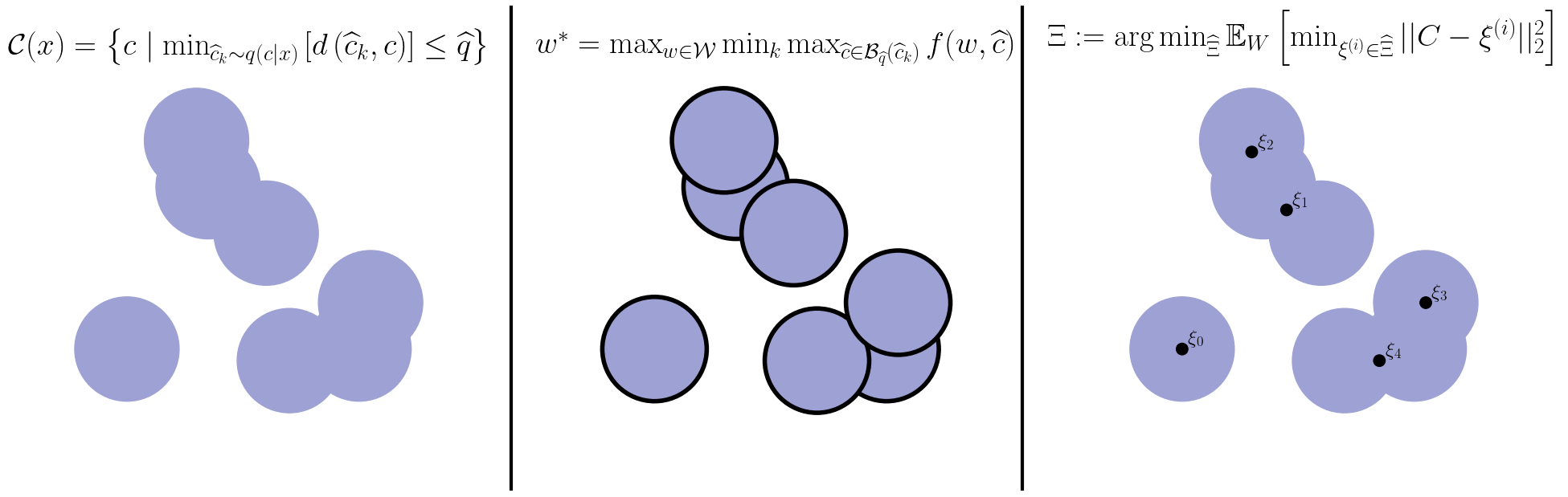}
  \caption{\label{fig:workflow} \textproc{CPO} leverages informative, non-convex conformal prediction regions for robust predict-then-optimize decision making. \textproc{CPO} uses a score function such that the resulting prediction regions can be decomposed into convex subregions over which optimization can be carried out efficiently. Visual summaries $\{\xi^{(i)}\}$ of the prediction region can similarly be efficiently sampled to gain intuition on the optimal decision $w^{*}$.
  }
\end{figure*}

The disadvantage is that such score functions ignore the structure that is often present in high-dimensional data, such as images. 
Choices of simplistic scores, thus, tend to be overly conservative and often produce convex prediction regions even when $\mathcal{P}(Y|X)$ is non-convex. 
Recent work has demonstrated that defining scores using conditional generative models produces sharper and, hence, more informative prediction regions \citep{feldman2023calibrated,wang2022probabilistic,patel2023variational}. We, thus, extend the line of data-driven predict-then-optimize work by considering such generative model-based prediction regions.


In addition to contributing to the predict-then-optimize line of inquiry, we view this work as addressing a concern of the conformal prediction community: how to use implicitly defined non-convex, high-dimensional prediction regions. Works producing such regions have themselves noted the difficulty in their use \citep{sesia2021conformal,izbicki2022cd}. Initial works on coverage for images have framed the utility of their results in highlighting regions of the image with the greatest variability and, hence, uncertainty \citep{angelopoulos2022image,horwitz2022conffusion,belhasin2023principal}. 

Extending such visualization 
gives invaluable intuition to the end user.
For instance, a black-box optimization procedure for producing drug candidates to robustly bind to a predicted protein structure offers little insight into the decision-making process; however, semantic summaries of the uncertainty region would reveal regions of flexibility of the protein, clarifying why particular structures were deemed optimal in the candidate drug. Such interest in explainable robust decision-making was highlighted in a recent survey \citep{sadana2023survey}, especially given the ``right to explanation'' mandated by the EU's ``General Data Protection Regulation'' \citep{doshi2017towards, kaminski2019right}. 
Our main contributions, thus, are:
\begin{itemize}
    \item Proposing Conformal-Predict-Then-Optimize (CPO) to leverage informative, non-convex prediction regions for decision-making.
    \item Providing interpretable visual summaries of uncertainty regions using representative points.
    \item Demonstrating the generality of CPO across a suite of benchmark tasks and a traffic routing task based on probabilistic weather prediction. 
\end{itemize}

\section{BACKGROUND}\label{section:background}

\subsection{Conformal Prediction}
Given a dataset $\mathcal{D}=\{(x^{(1)},y^{(1)}),\ldots (x^{(N)},y^{(N)})\}$ of i.i.d.~observations from a distribution $\mathcal{P}(X, Y)$, conformal prediction \citep{angelopoulos2021gentle, shafer2008tutorial} produces prediction regions with distribution-free theoretical guarantees. A prediction region maps from observations of $X$ to sets of possible values for $Y$ and is said to be marginally valid at the $1-\alpha$ level if $\mathcal{P}_{X, Y}(Y \notin \mathcal{C}(X))\leq \alpha$.

Split conformal is one popular version of conformal prediction. In this approach, marginally calibrated regions $\mathcal{C}$ are designed using a ``score function'' $s(x,y)$. Intuitively, the score function should have the quality that $s(x,y)$ is smaller when it is more reasonable to guess that $Y=y$ given the observation $X=x$. For example, if one has access to a function $\hat f(x)$ which attempts to predict $Y$ from $X$, one might take $s(x,y)=\Vert\hat f(x)-y\Vert$. The score function is evaluated on each point of a dataset $\mathcal{D_C}$ called the ``calibration dataset,'' yielding $\mathcal{S} = \{s(x^{(j)}, y^{(j)})\}_{j=1}^{N_{\mathcal{C}}}$, where $N_{\mathcal{C}} := |\mathcal{D_C}|$. Note that the calibration dataset cannot be used to pick the score function; if data is used to design the score function, it must independent of  $\mathcal{D_C}$. 
We then define $\widehat q(\alpha)$ as the $\ceil{(N_{\mathcal{C}}+1)(1-\alpha)}/N_{\mathcal{C}}$ quantile of $\mathcal{S}$. For any future $x$, the set $\mathcal{C}(x) = \{y \mid s(x, y) \le \widehat{q}(\alpha)\}$ satisfies $1 - \alpha\le\mathcal{P}(Y \in \mathcal{C}(X))$. This inequality is known as the coverage guarantee, and it arises from the exchangeability of the score of a test point $s(x', y')$ with $\mathcal{S}$. The coverage guarantee possesses finite-sample properties.

 As noted in Vovk's tutorial \citep{shafer2008tutorial}, while the coverage guarantee holds for any score function, different score functions may lead to more or less informative prediction regions. For example, the score $s(x,y)=1$ leads to the highly uninformative prediction region of all possible values of $Y$. Predictive efficiency is one way to quantify informativeness, defined as the inverse of the expected Lebesgue measure of the prediction region, i.e. $\left(\mathbb{E}[|\mathcal{C}(X)|]\right)^{-1}$ \citep{yang2021finite,sesia2020comparison}. Methods employing conformal prediction often seek to identify prediction regions that are efficient and calibrated.

\subsection{Representative Points}
The problem of summarizing the distribution of a random vector with points $\Xi := \{\xi^{(i)}\}_{i=1}^N$ arises in many contexts, such as in optimal stratification \citep{dalenius1950problem,dalenius1951problem}, density estimation \citep{flury1993representing}, and signal quantization \citep{max1960quantizing}. Such points are known as representative points (RPs). Denoting the space of all sets $\widehat{\Xi}$ such that $|\widehat{\Xi}| \le n$ as $\zeta$, the RPs of a random variable $X$ are
\begin{equation}
    \Xi := \argmin_{\widehat{\Xi}\in\zeta} \mathbb{E}_{X}\left[\min_{\xi^{(i)} \in \widehat{\Xi}} || X - \xi^{(i)} ||_2^2 \right].
\end{equation}
For a comprehensive review of representative points, see \citep{fang2023review}. Despite extensive study, no general algorithm exists for the efficient construction of representative points for arbitrary distributions. Typical implementations use clustering algorithms, such as Lloyd's algorithm, on $\{x^{(i)}\}_{i=1}^M \sim\mathcal{P}(X)$.

\subsection{Predict-Then-Optimize}
Predict-then-optimize problems are formulated as
\begin{equation}
\begin{aligned}
w^{*}(x) := \min_{w\in\mathcal{W}} \quad & \mathbb{E}[C^{T} w\mid x],
\end{aligned}
\end{equation}
where $w$ are decision variables, $C$ an \textit{unknown} cost parameter, $x$ observed contextual variables, and $\mathcal{W}$ a compact feasible region. The predict-then-optimize framework is so called as the nominal approach first predicts $\widehat{c} := f(x)$ and subsequently solves $\min_{w} \widehat{c}^T w$. 
Alternatively, a predictive contextual distribution $\mathcal{P}(C\mid x)$ is assumed, with respect to which the optimization formulation is solved. A full review is presented in \citep{elmachtoub2022smart}.

This formulation, however, is inappropriate in risk-sensitive downstream tasks. For this reason, recent works have begun investigating a risk-sensitive variant or ``robust'' alternative to this traditional formulation, namely by replacing $\mathbb{E}[C^{T} w\mid x]$ with $\max_{\widehat{c}\in\mathcal{U}(x)} \widehat{c}^{T} w$ \citep{ohmori2021predictive, chenreddy2022data, sun2023predict}, where $\mathcal{U}(x)$ is constructed to guarantee coverage of $c$, precisely stated in Lemma \ref{lemma:coverage_bound}.




\section{METHOD}\label{section:method}
We now propose CPO, a way to perform robust predict-then-optimize decision-making over informative, non-convex prediction regions based on generative models. We then discuss how to construct visual summaries of the contents of the conformal prediction regions using a collection of $N$ representative points. 

\subsection{CPO: Problem Formulation}

Let $c\in\mathcal{C}$, where $(\mathcal{C}, d)$ is a general metric space, and $\mathcal{F}$ be the $\sigma$-field of $\mathcal{C}$. While the standard predict-then-optimize framework assumes a linear objective function $c^{T} w$, we consider general convex-concave objective functions $f(w, c)$ that are $L$-Lipschitz in $c$ under the metric $d$ for any fixed $w$, as follows:
\begin{equation}\label{eqn:robust_po}
\begin{gathered}
w^{*}(x) := \min_{w,\mathcal{U}} \max_{\widehat{c}\in\mathcal{U}(x)} \quad f(w, \widehat{c}) \\
\textrm{s.t.} \quad \mathcal{P}_{X,C}(C\in\mathcal{U}(X)) \ge 1-\alpha, \\
\end{gathered}
\end{equation}
where $\mathcal{U} : \mathcal{X}\rightarrow\mathcal{F}$ is a uncertainty region predictor. Exact solution of this problem is intractable, as no practical methods exist to optimize over the predictor function space $\mathcal{U}$. Practical solution of this optimization problem, thus, involves optimizing over several prespecified uncertainty region predictors $\{\mathcal{U}_i\}_{i=1}^{N}$. For any fixed $\mathcal{U}$, this robust counterpart to the nominal predict-then-optimize problem produces a valid upper bound if $c\in\mathcal{U}(x)$. Denoting the pessimism gap as $\Delta(x, c) := \min_{w} \max_{\widehat{c}\in\mathcal{U}(x)} f(w, \widehat{c}) - \min_{w} f(w, c)$, we clearly see $\Delta(x, c) \ge 0$ if $c\in\mathcal{U}(x)$,
formalized below. 
\begin{lemma}\label{lemma:coverage_bound}
    Consider any $f(w, c)$ that is $L$-Lipschitz in $c$ under the metric $d$ for any fixed $w$. Assume further that $\mathcal{P}_{X,C}(C \in \mathcal{U}(X)) \ge 1 - \alpha$. Then, 
    \begin{equation}
        \mathcal{P}_{X,C}\left(0\le \Delta(X, C) \le L \mathrm{\ diam}(\mathcal{U}(X))\right) \ge 1 - \alpha.
    \end{equation}
\end{lemma}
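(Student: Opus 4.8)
The plan is to reduce the probabilistic statement to a deterministic pointwise inequality that holds on the (high-probability) event $E := \{C \in \mathcal{U}(X)\}$, and then invoke the coverage hypothesis. So first I would fix an arbitrary pair $(x,c)$ with $c \in \mathcal{U}(x)$ and argue the chain $0 \le \Delta(x,c) \le L\,\mathrm{diam}(\mathcal{U}(x))$ holds for every such pair; the lemma then follows immediately since $\mathcal{P}_{X,C}(E) \ge 1-\alpha$ and $E$ is contained in the event appearing in the statement (this also resolves measurability: the target event contains the measurable event $E$, and if one wants equality-free bookkeeping one notes the inequalities are vacuous when $\mathrm{diam}(\mathcal{U}(x)) = \infty$).

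For the lower bound $\Delta(x,c) \ge 0$: since $c \in \mathcal{U}(x)$, for every feasible $w$ we have $\max_{\widehat{c} \in \mathcal{U}(x)} f(w,\widehat{c}) \ge f(w,c)$; taking the minimum over $w \in \mathcal{W}$ on both sides preserves the inequality, giving $\min_w \max_{\widehat{c} \in \mathcal{U}(x)} f(w,\widehat{c}) \ge \min_w f(w,c)$, i.e. $\Delta(x,c) \ge 0$. For the upper bound: let $w^\star \in \argmin_{w \in \mathcal{W}} f(w,c)$ (a minimizer exists by compactness of $\mathcal{W}$ and continuity of $f(\cdot,c)$, which follows from convexity; alternatively one runs the same argument with an $\varepsilon$-minimizer and lets $\varepsilon \downarrow 0$). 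Then
\begin{equation}
\min_{w \in \mathcal{W}} \max_{\widehat{c} \in \mathcal{U}(x)} f(w,\widehat{c}) \;\le\; \max_{\widehat{c} \in \mathcal{U}(x)} f(w^\star, \widehat{c}).
\end{equation}
Now apply the $L$-Lipschitz hypothesis in the second argument, together with $c \in \mathcal{U}(x)$: for any $\widehat{c} \in \mathcal{U}(x)$,
\begin{equation}
f(w^\star, \widehat{c}) \;\le\; f(w^\star, c) + L\, d(\widehat{c}, c) \;\le\; f(w^\star, c) + L\, \mathrm{diam}(\mathcal{U}(x)),
\end{equation}
using $d(\widehat{c},c) \le \sup_{c_1,c_2 \in \mathcal{U}(x)} d(c_1,c_2) = \mathrm{diam}(\mathcal{U}(x))$. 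Taking the supremum over $\widehat{c} \in \mathcal{U}(x)$ and recalling $f(w^\star,c) = \min_w f(w,c)$ yields $\min_w \max_{\widehat{c} \in \mathcal{U}(x)} f(w,\widehat{c}) \le \min_w f(w,c) + L\,\mathrm{diam}(\mathcal{U}(x))$, which is exactly $\Delta(x,c) \le L\,\mathrm{diam}(\mathcal{U}(x))$.

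Combining the two bounds, the deterministic implication $c \in \mathcal{U}(x) \implies 0 \le \Delta(x,c) \le L\,\mathrm{diam}(\mathcal{U}(x))$ holds, so $\{C \in \mathcal{U}(X)\} \subseteq \{0 \le \Delta(X,C) \le L\,\mathrm{diam}(\mathcal{U}(X))\}$ and monotonicity of probability gives the claim. I do not expect a genuine obstacle here; the only points requiring care are (i) whether the inner $\max$ is attained — handled by reading it as a supremum throughout, which is all the argument uses — and (ii) existence of $w^\star$, handled by compactness or by an $\varepsilon$-minimizer limiting argument. The essential content is simply that feasibility of the true $c$ gives the pessimism gap its sign, and Lipschitzness converts the diameter of the uncertainty set into the worst-case objective inflation.
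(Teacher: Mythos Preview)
Your proposal is correct and follows essentially the same approach as the paper: fix $(x,c)$ on the event $c\in\mathcal{U}(x)$, establish the deterministic two-sided bound on $\Delta(x,c)$ via Lipschitzness and the diameter, and then invoke the coverage assumption. The only cosmetic difference is that the paper bounds the upper tail through the generic inequality $\min_w g(w)-\min_w h(w)\le \max_w |g(w)-h(w)|$ rather than by plugging in a specific minimizer $w^\star$ as you do; your version is arguably cleaner and additionally spells out the lower bound explicitly.
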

The proof is deferred to Appendix \ref{section:coverage_bound}. Thus, $1-\alpha$ validity of $\mathcal{U}$ ensures the RO procedure produces a valid bound with probability $1-\alpha$, with more efficient prediction regions resulting in tighter upper bounds.



\subsection{CPO: Score Function}
We assume a conditional generative model $q(C\mid X)$ is learned for this prediction task. 
For most score functions, the min-max optimization problem of Equation \ref{eqn:robust_po} is computationally intractable. Crucially, however, we can consider an extension to the score proposed in \citep{wang2022probabilistic}, which lends itself to a decomposition under which such optimization becomes tractable. For a fixed $K$ and $\{\widehat{c_k}\}_{k=1}^K \sim q(C \mid x)$, let
\begin{equation}\label{eqn:score_gen_pp}
    s(x,c) = \min_{k}\left[d\left(\widehat{c}_k, c\right) \right].
\end{equation}
We refer to this score as ``Generalized Probabilistic Conformal Prediction,'' (GPCP) whose validity follows from that of the original PCP framework \citep{wang2022probabilistic}. We discuss the selection of $K$ in Section \ref{section:k_selection}.

\subsection{CPO: Optimization Algorithm}\label{section:optimization}
We fix $\alpha\in[0,1]$ and take $\mathcal{U}(x)$ to be the $1-\alpha$ prediction region $\mathcal{C}(x)$. 
Let $\phi(w) := \max_{\widehat{c} \in \mathcal{C}(x)} f(w, \widehat{c})$. It follows that $\phi(w)$ is convex by Danskin's Theorem by assumption of the convexity of $f$ in $w$. Exact solution of the min-max problem, thus, follows using standard gradient-based optimization techniques on $\phi(w)$. By Danskin's Theorem, $\nabla_{w} \phi(w) = \nabla_{w} f(w, c^{*})$, where $c^{*} := \max_{\widehat{c} \in \mathcal{C}(x)} f(w, \widehat{c})$. We follow the standard projected gradient descent optimization scheme, projecting into $\mathcal{W}$ at each iterate, denoted by $\Pi_{\mathcal{W}}$.


Efficient solution of this RO problem, therefore, reduces to being able to efficiently solve the maximization problem over $\mathcal{C}(x)$. While challenging over general nonconvex regions, the GPCP score formulation lends itself to a highly structured prediction region, namely of the form $\mathcal{C}(x) = \bigcup_{k=1}^{K} \mathcal{B}_{\widehat{q}}(\widehat{c}_{k})$ with $\mathcal{B}_{\widehat{q}}$ being a ball of radius $\widehat{q}$, the conformal quantile, under the $d$ metric. This decomposition of $\mathcal{C}(x)$ means the maximum can be efficiently computed by aggregating the maxima over the individual balls:
\begin{equation}
    \max_{\widehat{c} \in \mathcal{C}(x)} f(w, \widehat{c})
    = \max_{k} \max_{\widehat{c} \in \mathcal{B}_{\widehat{q}}(\widehat{c}_{k})} f(w, \widehat{c}),
\end{equation}
where the maximum over a ball can be efficiently computed with traditional convex optimization techniques. 
This procedure is summarized in Algorithm \ref{alg:CPO-Opt}. The convergence of this procedure proceeds as follows, whose proof is deferred to Appendix \ref{section:convergence}.

\begin{algorithm}
  \caption{\label{alg:CPO-Opt} \textproc{CPO-Opt}}
  \begin{algorithmic}[1]
    \Procedure{CPO-Opt}{}
    \Statex \textbf{Inputs: } Context $x$, CGM $q(C\mid X)$, Optimization steps $T$, Score samples $K$, Conformal quantile $\widehat{q}$
    \State $w \sim U(\mathcal{W}), \{\widehat{c_k}\}_{k=1}^{K} \sim q(C \mid x)$
    \For{$t \in\{1,\ldots T\}$}
    \State $\left\{c_{k}^{*} \gets \argmax_{\widehat{c} \in \mathcal{B}_{\widehat{q}}(\widehat{c}_{k})} f(w, \widehat{c})\right\}_{k=1}^{K}$
    \State $c^{*} \gets \argmax_{c_{k}^{*}} f(w, c_{k}^{*})$
    \State $w \gets \Pi_{\mathcal{W}} (w - \eta \nabla_{w} f(w, c^{*}))$
    \EndFor
    \State{\textbf{Return} $w$}
    \EndProcedure
    \end{algorithmic}
\end{algorithm}

\begin{lemma}\label{lemma:pgd_convergence_main}
    Let $\phi(w) := \max_{\widehat{c} \in \bigcup_{k=1}^{K} \mathcal{B}_{\widehat{q}}(\widehat{c}_{k})} f(w, \widehat{c})$ for $\{\widehat{c}_{k}\}_{k=1}^{K}\subset\mathcal{C}$, $\widehat{q}\in\mathbb{R}^{+}$, and $f(w, c)$ convex-concave and $L$-Lipschitz in $c$ for any fixed $w$. Let $w^{*}\in \mathcal{W}$ be a minimizer of $\phi$. For any $\epsilon > 0$, define $T := \frac{L^2 || w_0 - w^{*} ||}{\epsilon^2}$ and $\eta := \frac{|| w_0 - w^{*} ||}{L \sqrt{T}}$. Then the iterates $\{w_{t}\}_{t=0}^{T}$ returned by Algorithm \ref{alg:CPO-Opt} satisfy
    \begin{equation}
        \phi\left(\frac{1}{T+1}\sum_{t=0}^{T} w_{t}\right) -  \phi(w^{*}) \le \epsilon.
    \end{equation}
\end{lemma}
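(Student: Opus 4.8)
The plan is to observe that Algorithm~\ref{alg:CPO-Opt} is exactly projected subgradient descent applied to the single convex function $\phi$, and then to invoke the textbook $O(1/\sqrt{T})$ rate for minimizing a convex, Lipschitz objective over a convex feasible set.

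First I would check that each iteration of the algorithm produces a bona fide subgradient of $\phi$ at the current iterate $w_t$. The region $\mathcal{C}(x) = \bigcup_{k=1}^{K}\mathcal{B}_{\widehat q}(\widehat c_k)$ is a finite union of compact balls, so $\max_{\widehat c\in\mathcal{C}(x)} f(w_t,\widehat c) = \max_k \max_{\widehat c\in\mathcal{B}_{\widehat q}(\widehat c_k)} f(w_t,\widehat c)$, which is precisely what lines~4--5 compute: $c_k^{*}$ maximizes $f(w_t,\cdot)$ over the $k$-th ball (a tractable concave maximization since $f$ is concave in $c$) and $c^{*}$ is the best among the $c_k^{*}$. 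Since $\phi$ is a pointwise maximum of the convex maps $w\mapsto f(w,\widehat c)$ it is convex, and since the maximum defining $\phi(w_t)$ is attained on the compact set $\mathcal{C}(x)$ with $f$ continuous, Danskin's Theorem gives $\nabla_w f(w_t,c^{*})\in\partial\phi(w_t)$. I would also record that $L$-Lipschitzness of $f$ in $w$ (equivalently $\|\nabla_w f(w,c)\|\le L$ uniformly) bounds this subgradient: $\|\nabla_w f(w_t,c^{*})\|\le L$. Hence line~6, $w_{t+1}=\Pi_{\mathcal{W}}\!\left(w_t-\eta\nabla_w f(w_t,c^{*})\right)$, is projected subgradient descent on $\phi$ with subgradients bounded by $L$, run over the convex compact set $\mathcal{W}\ni w^{*}$.

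Second I would carry out the standard telescoping argument. Using nonexpansiveness of the Euclidean projection onto the convex set $\mathcal{W}$ together with $w^{*}\in\mathcal{W}$, one gets $\|w_{t+1}-w^{*}\|^2 \le \|w_t-w^{*}\|^2 - 2\eta\langle g_t, w_t-w^{*}\rangle + \eta^2 L^2$ with $g_t:=\nabla_w f(w_t,c^{*})$, and convexity of $\phi$ gives $\langle g_t, w_t-w^{*}\rangle \ge \phi(w_t)-\phi(w^{*})$. Rearranging, summing over $t=0,\dots,T$, discarding the final nonnegative distance term, dividing by $T+1$, and applying Jensen's inequality to $\phi$ yields $\phi\!\left(\tfrac{1}{T+1}\sum_{t=0}^{T}w_t\right)-\phi(w^{*}) \le \tfrac{\|w_0-w^{*}\|^2}{2\eta(T+1)} + \tfrac{\eta L^2}{2}$. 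Substituting the prescribed step size and horizon balances the two terms, so the right-hand side is at most $\epsilon$, which is the claim.

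The only non-mechanical step is the first one: establishing that the explicit ball-by-ball maximization in the algorithm actually returns an element of $\partial\phi(w_t)$. This is where the GPCP score structure is essential --- it is what makes $\mathcal{C}(x)$ a finite union of compact balls so that the max decomposes, each piece is a well-posed concave maximization, and Danskin's Theorem applies --- and it is also where the Lipschitz hypothesis is used, to supply the uniform subgradient bound that the $O(1/\sqrt{T})$ rate requires. Once the algorithm is identified with projected subgradient descent on $\phi$, the remainder is the canonical analysis.
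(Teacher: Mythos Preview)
Your approach is essentially identical to the paper's: both identify Algorithm~\ref{alg:CPO-Opt} as projected (sub)gradient descent on $\phi$, invoke Danskin's Theorem for convexity and to justify $\nabla_w f(w_t,c^{*})\in\partial\phi(w_t)$, argue $\phi$ is $L$-Lipschitz, and then apply the standard $O(1/\sqrt{T})$ rate --- the paper packages this last step as a black-box lemma while you write out the telescoping argument explicitly. One shared wrinkle worth noting: the lemma as stated assumes $f$ is $L$-Lipschitz \emph{in $c$}, but both your proof and the paper's actually need $L$-Lipschitz \emph{in $w$} to bound the subgradients and obtain the rate; this appears to be a misstatement in the hypothesis rather than a flaw in your reasoning.
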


\subsection{CPO: $K$ Selection}\label{section:k_selection}
Crucially, the convergence highlighted in Lemma \ref{lemma:pgd_convergence_main} reveals that the number of ``outer'' iterations (i.e. $T$) has no dependence on $K$. This is apparent from the proof, in which the iterate count $T$ hinges upon the Lipschitz constant of $\phi(w) = \max_{k} \max_{\widehat{c} \in \mathcal{B}_{\widehat{q}}(\widehat{c}_{k})} f(w, \widehat{c}) := \max_{k} \phi_{k}(w)$, which critically is $L$-Lipschitz \textit{regardless} of what $K$ is selected, as  each $\phi_{k}(w)$ is $L$-Lipschitz. 

We can, thus, solely focus attention on the impact the choice of $K$ has on the ``inner'' optimization computational cost, namely $\max_{k} \phi_{k}(w)$.
This linearly increasing cost with $K$, however, must be juxtaposed with the improved \textit{statistical} efficiency of such prediction regions. In particular, \citep{wang2022probabilistic} empirically demonstrated region size generally decreased nonlinearly up to a saturation point as a function of $K$. 

Critically, this inflection point can be determined \textit{prior} to performing the optimization, since doing so only requires access to $q(C\mid X)$ and test samples to estimate the prediction region size. As pointed out in \citep{wang2022probabilistic} and proven in \citep{chan2008slightly}, estimation of the volume of a union of hyperspheres is complicated by the need to account for overlapped regions. 
$K$ is, thus, chosen based on Monte Carlo estimates of the prediction region volume using Voronoi cells of the hypersphere centers given by \citep{edelsbrunner1995union}:
\begin{equation}\label{eqn:cp_mc_est}
\widehat{\ell}(\{\mathcal{B}_{\widehat{q}}(\widehat{c}_{k})\}) := |\mathcal{B}_{\widehat{q}}| \sum_{k = 1}^K \mathcal{P}_{C \sim U(\mathcal{B}_{\widehat{q}}(\widehat{c}_k))}(C \in V(\widehat{c}_k)),
\end{equation}
where 
 $C \sim U(\mathcal{B}_{\widehat{q}}(\widehat{c}_k))$ denotes a random variable defined uniformly over the region associated with $\widehat{c}_k$, $|\mathcal{B}_{\widehat{q}}|$ the volume of a hypersphere of radius $\widehat{q}$, and $V(\widehat{c}_k)$ the Voronoi cell of $\widehat{c}_k$, defined as $\{z \in \mathbb{R}^d \mid d(\widehat{c}_k, z) \leq d(\widehat{c}_{k'}, z), k'\neq k\}$. Muller's method enables efficient sampling of $U(\mathcal{B}_{\widehat{q}}(\widehat{c}_k))$ \citep{muller1959note,fishman2013monte}.


We then choose $K^{*}$ to be the inflection point, namely the $\argmin_{K} |\widehat{\ell}_K - \widehat{\ell}_{K + 1}| \leq \epsilon$ for some 
user-specified $\epsilon$ volume tolerance. Critically, these volume estimates must be performed on a distinct subset of the data from $\mathcal{D}_{\mathcal{C}}$ 
as exchangeability with future test points is otherwise lost in conditioning on $\mathcal{D_C}$ for selecting $K^*$ \citep{yang2021finite}. We, thus, partition $\mathcal{D}_{\mathcal{C}} := \mathcal{D}_{\mathcal{C}_1} \cup \mathcal{D}_{\mathcal{C}_2}$, using $\mathcal{D}_{\mathcal{C}_1}$ for calibration and $\mathcal{D}_{\mathcal{C}_2}$ for volume estimation, detailed in Algorithm \ref{alg:CPO}.

\begin{algorithm}
  \caption{\label{alg:CPO} \textproc{CPO}}
  \begin{algorithmic}[1]
    \Procedure{VolumeEst}{}
    \Statex \textbf{Inputs: } Context $x$, CGM $q(C\mid X)$, Conformal quantile $\widehat{q}$
    \State $\{\widehat{c}_k\}_{k = 1}^{K} \sim q(C_{1:K}\mid x)$
    \State $\left\{\{c_{k, m}\}_{m = 1}^{M}  \sim U(\mathcal{B}_{\widehat{q}}(\widehat{c}_{k})) \right\}_{k = 1}^{K}$
    \State \textbf{Return} $|B_{\widehat{q}}| \sum_{k=1}^{K} \frac{1}{M}\sum_{m=1}^{M}\mathbbm{1}\left[c_{k, m} \in V(\widehat{c}_{k})\right]$
    \EndProcedure
    \Statex 
    \Procedure{CPO}{}
    \Statex \textbf{Inputs: } Context $x$, CGM $q(C\mid X)$, Optimization steps $T$, Desired coverage $1-\alpha$, Max samples $K_{\max}$, Volume Tolerance $\epsilon$, Calibration sets $\mathcal{D}_{\mathcal{C}_1}, \mathcal{D}_{\mathcal{C}_2}$
    \For{$K \in\{1,\ldots K_{\max}\}$}
    \State $s_K(x, c) \gets \min_{\widehat{c}_k\in \{\widehat{c}_i\} \sim q(C_{1:K}\mid x) }\left[d\left(\widehat{c}_k, c\right) \right]$
    \State $\mathcal{S}_K \gets \left\{s_K(x^{(i)}, c^{(i)})\mid (x^{(i)}, c^{(i)})\in\mathcal{D}_{\mathcal{C}_1}\right\}$
    \State $\widehat{q}_K \gets \frac{\ceil{(|\mathcal{D}_{\mathcal{C}_1}| +1)(1-\alpha)}}{|\mathcal{D}_{\mathcal{C}_1}|} \text{ quantile of } \mathcal{S}_K$
    \State $\widehat{\ell}_K \gets \frac{1}{|\mathcal{D}_{\mathcal{C}_2}|}\sum_{i=1}^{|\mathcal{D}_{\mathcal{C}_2}|}\textproc{VolumeEst}(x^{(i)}, q, \widehat{q}_K)$
    \EndFor
    \State $K^{*} \gets \argmin_{K} \left|\widehat{\ell}_K - \widehat{\ell}_{K + 1}\right| \leq \epsilon$
    \State{\textbf{Return} $\textproc{CPO-Opt}(x, q, T, K^{*}, \widehat{q}_{K^{*}})$}
    \EndProcedure
    \end{algorithmic}
\end{algorithm}

\subsection{CPO: Representative Points}\label{section:method_rps}
We now frame the problem of summarizing the prediction region $\mathcal{C}(x)$. We critically note that this issue of interpretability is non-existent in traditional approaches to robust predict-then-optimize, where uncertainty regions are interpretable by construction, being balls around nominal estimates $\mathcal{B}_{\epsilon}(\widehat{c})$. In other words, there is a fundamental tension in qualitative interpretability and the expressiveness of uncertainty regions, requiring a bespoke method for recovering intuition when leveraging conformal prediction regions. Formally, for a user-specified number of summary points $N$ and query $x$, we seek
\begin{equation}
    \Xi(x) := \argmin_{\widehat{\Xi}\in\zeta} \mathbb{E}_{C\sim U(\mathcal{C}(x))}\left[\min_{\widehat{\xi}^{(i)} \in \widehat{\Xi}} d(C, \xi^{(i)}) \right].
\end{equation}
We use the shorthand $d(C, \Xi) := \min_{\xi^{(i)} \in \Xi} d(C, \xi^{(i)})$. In other words, we wish to construct representative points for a uniform sampling of the prediction region. A naive approach would simply involve explicitly gridding the output space $\mathcal{C}$, filtering such points with the rejection criterion of $\mathcal{C}(x)$, and clustering the remaining points per the $d$ metric. 
This, however, is intractable in high-dimensional cases. Thus, a sampling method is employed to circumvent gridding, 
paralleling the technique leveraged for volume estimation.

$M$ samples are initially drawn $\{c_{k, m}\}_{m = 1}^{M} \sim U(\mathcal{B}_{\widehat{q}}(\widehat{c}_k))$ for each $k$. Importantly, such uniform sampling of the balls leads to \textit{non}-uniform sampling over $\mathcal{C}(x)$ if naively aggregated across $k$, as overlapped regions will be more densely sampled. For this reason, we subsample by discarding those samples $c_{k, m}$ for which $c_{k, m} \in V(\widehat{c}_{k'})$ for $k\neq k'$. This results in samples $C := \{c_i\}$ drawn from the desired $U(\mathcal{C}(x))$.


RPs must be aggregated separately for each connected subregion of $\Omega_{\ell} \subset \mathcal{C}(x)$ to ensure each $\xi^{(i)} \in \mathcal{C}(x)$. That is, we must identify $C_{\ell} := C \cap \Omega_{\ell}$. To do so, we determine if two points $(c_{i}, c_{j})$ belong to the same $\Omega_{\ell}$ by considering the corresponding connected components problem defined on the graph induced by the edge criterion $e_{i,j} = \mathbbm{1}[d(c_{i}, c_{j}) < \widehat{q}]$.
For each $C_{\ell}$, we find a subset $N_{\ell} :=N(|C_{\ell}| / |C|)$ of the total $N$ representative points, for which we use $\textproc{K-Means++}$ with the $d$ metric. 
The full procedure is in Algorithm \ref{alg:CPO-RPs}.

\begin{algorithm}
  \caption{\label{alg:CPO-RPs} \textproc{CPO-RPs}: $\textproc{QueryBall}(\mathcal{T}, x, r)$ is an assumed subroutine that returns all points in the $k$d tree $\mathcal{T}$ that are within a radius $r$ of $x$.}
  \begin{algorithmic}[1]
    \Procedure{CPO-RPs}{}
    \Statex \textbf{Inputs: } Context $x$, CGM $q(C\mid X)$, RP count $N$, Conformal quantile $\widehat{q}$
    \State $\{\widehat{c}_k\}_{k = 1}^{K} \sim q(C_{1:K}\mid x)$
    \State $\left\{\{c_{k, m}\}_{m = 1}^{M}  \sim U(\mathcal{B}_{\widehat{q}}(\widehat{c}_{k})) \right\}_{k = 1}^{K}$
    \State $C \gets \{c_{k, m} \mid c_{k, m} \in V(\widehat{c}_{k}) \}_{k=1,m=1}^{K,M}$
    \State $\mathcal{T} \gets \textproc{KD-Tree}(C)$
    \State $\mathcal{E} \gets \bigcup_i \{c_{i} \times \textproc{QueryBall}(\mathcal{T}, c_{i}, \widehat{q}) \mid c_{i} \in\mathcal{T} \}$
    \State $\{C_{\ell}\} \gets \textproc{ConnectedComponents}(\mathcal{G}(C, \mathcal{E}))$
    \State $\Xi \gets \bigcup_{\ell=1}^{L} \{\textproc{K-Means++}(C_{\ell}, N \left(\frac{|C_\ell|}{|C|}\right), d)\}$ 
    \State{\textbf{Return} $\Xi$}
    \EndProcedure
    \end{algorithmic}
\end{algorithm}

\subsection{CPO: Projection}
After obtaining $\Xi$, further insight can be gleaned by exploring the local projection around each $\xi^{(i)}$. An example of this is visualizing the road-level variability in traffic predictions from uncertainty in upstream weather predictions, shown in Figure \ref{fig:travel_times}.
To do this, we visualize the extent of the Voronoi cell $V^{(i)}\subset\mathcal{C}(x)$ associated with $\xi^{(i)}$ along the $\mathcal{C}$ space dimensions. 
That is, for each Voronoi cell, we visualize the Frechet variance along the projections $\{\pi_j\}_{j=1}^J$, where $J = \text{dim}(\mathcal{C})$. Such projections preserve the structure of the objects being modeled, making them visually interpretable. For instance, $\pi_j$ in the traffic example corresponds to the projection of $V^{(i)}$ to a \textit{single} road $j$. Similarly, $\pi_j$ would project to a single atom for a molecular reconstruction task. 
Formally,
\begin{equation}
    \left|V^{(i)}_j\right| := \sum_{c\in V^{(i)}}  d^2(\pi_{j}(c), \pi_{j} (\xi^{(i)})).
\end{equation}


\section{EXPERIMENT}\label{section:experiments}
We now demonstrate the utility of the CPO framework. Code will be made public upon acceptance. 


\subsection{SBI: Fractional Knapsack}\label{section:experiments_sbi}
We first study the fractional knapsack problem under various complex contextual mappings, namely
\begin{gather}
w^{*}(x) := \min_{w,\mathcal{U}} \max_{\widehat{c}\in\mathcal{U}(x)} \quad -\widehat{c}^{T} w \\
\textrm{s.t.} w\in [0,1]^{n}, p^{T} w \le B, \mathcal{P}_{X,C}(C\in\mathcal{U}(X)) \ge 1-\alpha, \nonumber
\end{gather}
where $p\in\mathbb{R}^{n}$ and $B > 0$. The distributions $\mathcal{P}(C)$ and $\mathcal{P}(X\mid C)$ are taken to be those from various simulation-based inference (SBI) benchmark tasks provided by \citep{hermans2021averting}, chosen as they have $\mathcal{P}(C\mid X)$ with complex structure. We specifically study Two Moons, Lotka-Volterra, Gaussian Linear Uniform, Bernoulli GLM, Susceptible-Infected-Recovered (SIR), and Gaussian Mixture, fully described in Appendix \ref{section:benchmark}. We note that, while these particular distributions have little semantic meaning in the traditional context of fractional knapsack, this experiment highlights the capacity for CPO to succeed even for complex distributions, which we leverage in a more semantically meaningful case in Section \ref{section:experiments_routing}.

\subsubsection{SBI: Quantitative Assessment}


\begin{table*}[t]
\caption{\label{table:sbi_results} Coverages across tasks for $\alpha=0.05$ are shown in the left table, where coverage was assessed over a batch of 1,000 i.i.d. test samples. Objective optima are shown in the right table, averaged over a batch of 10 i.i.d. test samples with standard deviations in parentheses. The nominal optima are included as reference points.}
\resizebox{\textwidth}{!}{%
\begin{tabular}[t]{cccccc}
\toprule
& Box & PTC-B & Ellipsoid & PTC-E & CPO \\
\midrule
Gaussian Uniform & 0.94 & 0.96 & 0.95 & 0.95 & 0.95 \\
Gaussian Mixture & 0.95 & 0.93 & 0.94 & 0.93 & 0.94 \\
Bernoulli GLM & 0.96 & 0.95 & 0.95 & 0.94 & 0.94 \\
Lotka Volterra & 0.95 & 0.96 & 0.94 & 0.94 & 0.95 \\
SIR & 0.94 & 0.95 & 0.93 & 0.95 & 0.93 \\
Two Moons & 0.93 & 0.94 & 0.94 & 0.94 & 0.96 \\
\bottomrule
\end{tabular}
    \quad
\begin{tabular}[t]{cccccc}
\toprule
         Box &        PTC-B &    Ellipsoid &        PTC-E &                   CPO &                Nominal \\
\midrule
   0.0 (0.0) &    0.0 (0.0) &    0.0 (0.0) & \textbf{-0.27 (0.35)} &  \textbf{-0.43 (0.4)} &  \textit{-4.48 (0.56)} \\
   0.0 (0.0) &  \textbf{-6.6 (1.67)} &    0.0 (0.0) & \textbf{-7.38 (1.78)} & \textbf{-7.77 (1.87)} & \textit{-11.66 (1.23)} \\
   0.0 (0.0) & \textbf{-0.18 (0.49)} &    0.0 (0.0) & \textbf{-0.06 (0.25)} & \textbf{-0.18 (0.37)} &  \textit{-3.53 (0.27)} \\
\textbf{-0.52 (0.02)} & -0.05 (0.24) &  -0.02 (0.0) & -0.22 (0.18) & \textbf{-0.68 (0.26)} &  \textit{-1.88 (0.01)} \\
-0.16 (0.02) & -0.22 (0.09) & -0.08 (0.01) & -0.22 (0.06) & \textbf{-0.38 (0.05)} &  \textit{-0.52 (0.02)} \\
   0.0 (0.0) &    0.0 (0.0) &    0.0 (0.0) &    0.0 (0.0) & \textbf{-0.15 (0.11)} &  \textit{-0.38 (0.01)} \\
\bottomrule
\end{tabular}
}
\end{table*}

We first demonstrate the quantitative improvement in decision-making from leveraging CPO over the box- (PTC-B) and ellipsoid-based (PTC-E) regions proposed in \citep{sun2023predict}, as well as box- and ellipsoid-based sets constructed based solely on observations of $\mathcal{P}(C)$, i.e. where we ignore $x$, referred to as Box and Ellipsoid. For CPO, we use
\begin{equation}\label{eqn:sbi_score}
    s(x,c) = \min_{k} || \widehat{c}_k - c ||_2^2.
\end{equation}
$q(\widehat{c} \mid x)$ was taken to be a neural spline normalizing flow \citep{durkan2019neural} trained with FAVI \citep{ambrogioni2019forward}. Visualizations of the exact and variational posteriors are provided in Appendix \ref{section:posteriors}. $K$s were chosen by studying the inflections of the prediction region volume estimate under each distributional setup, with $|\mathcal{D}_{\mathcal{C}_{1,2}}| = 1000$, seen in Figure \ref{fig:sbi_results_volume}. Inflection points were around $K=10$ for most setups.

For assessing coverage and the robust objective value, we sampled $|\mathcal{D}_{\mathcal{T}}| = 1000$ test points i.i.d. from $\mathcal{P}(X,C)$. Coverage was assessed across all 1000 samples by measuring the proportion of samples for which $s(x^{(i)},c^{(i)})\le\widehat{q}$. For assessing the objective, optimization was performed across 10 samples, with $p\sim U([0, 1000]^{n})$, $u\sim U(0, 1)$, and $B\sim U(\max_{i} p_{i}, \sum_i p_{i} - u \max_{i} p_{i})$ sampled per run.

\begin{figure}[H]
\centering
\includegraphics[scale=0.20]{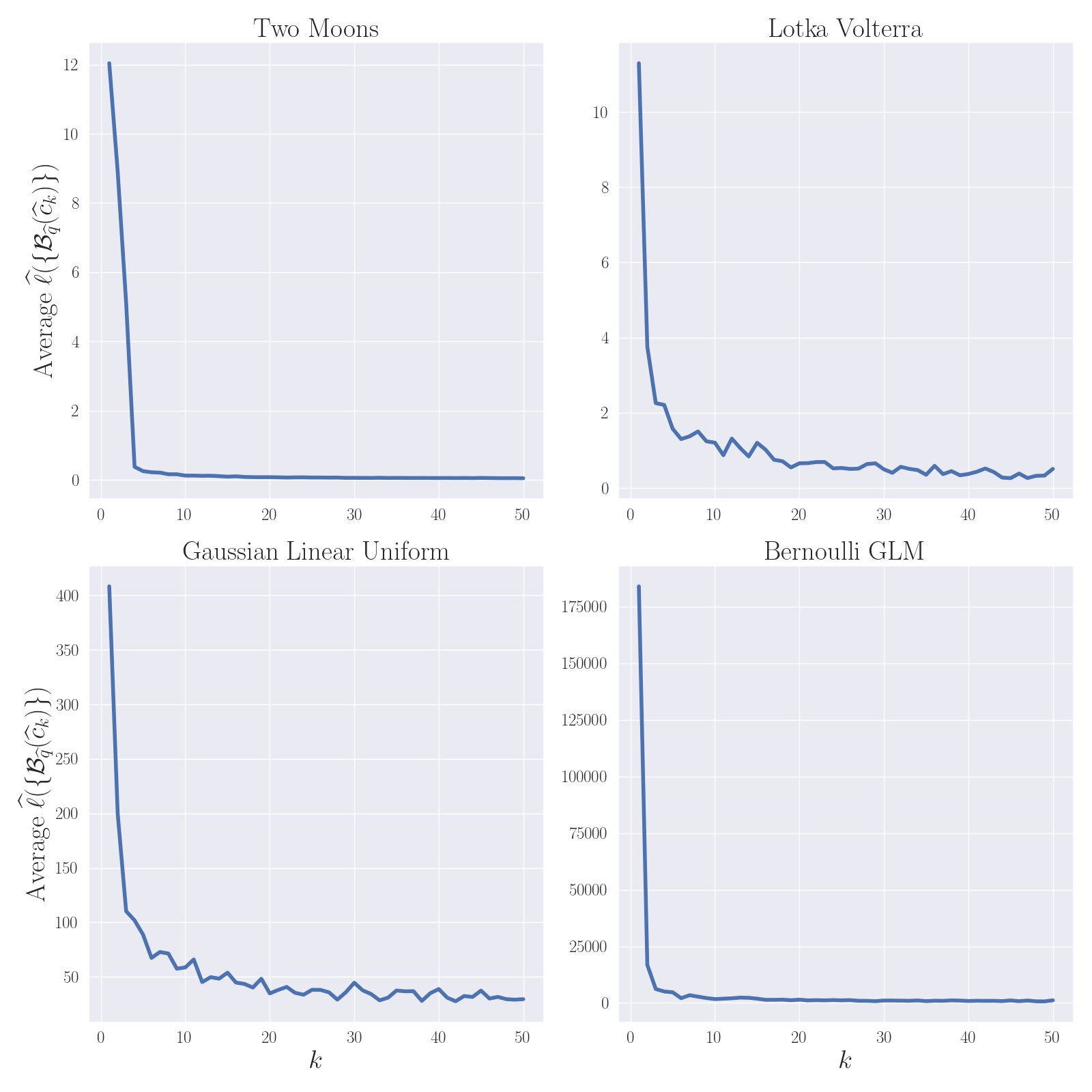}
\caption{\label{fig:sbi_results_volume} Average volume estimates $\widehat{\ell}(\{\mathcal{B}_{\widehat{q}}(\widehat{c}_{k}^{(i)})\})$ over $x^{(i)}\in\mathcal{D}_{\mathcal{C}_2}$ across SBI benchmarks.}
\end{figure}


The results are seen in Table \ref{table:sbi_results}. We include the nominal optima as a reference, i.e. $\min_{w} -c^{T} w$ for the \textit{true} $c$. Recall that, by Lemma \ref{lemma:coverage_bound}, with proper $\mathcal{U}(x)$, the robust objective values should be valid upper bounds on the nominal optima, with more conservative regions resulting in more vacuous bounds. We see this as, although all approaches result in valid coverage guarantees and hence produce valid upper bounds, the overly conservative nature of alternate regions results in their consistent looseness compared to CPO. Notably, these differences are more accentuated in cases where $\mathcal{P}(C|X)$ has complex structure; level sets under the Gaussian Linear, Gaussian Mixture, and Bernoulli GLM cases are roughly ellipsoidal, seen in Appendix \ref{section:posteriors}, resulting in comparable performance between CPO and PTC-E. Thus, as discussed and highlighted in Section \ref{section:experiments_routing}, the benefits of CPO primarily manifest under difficult-to-model contextual distributions, where sets for simple geometries become overly large.


\subsubsection{SBI: Representative Point Recovery}
We next demonstrate that Algorithm \ref{alg:CPO-RPs} can approximately recover RPs for such uncertainty regions, leveraged to glean insights in the modeling task of Section \ref{section:experiments_routing}. 
Notably, RPs are not unique; for instance, any rigid rotation of $\Xi$ for a uniform distribution over a 2D ball results in a distinct yet optimal set $\widehat{\Xi}$ of RPs. The RP objective minimum, however, is unique, meaning suboptimality can be assessed by measuring
\begin{equation}
    \Delta(\Xi, \widehat{\Xi}) := \mathbb{E}_{C\sim U(\mathcal{C}(x))}\left[d(C, \widehat{\Xi}) - d(C, \Xi) \right].
\end{equation}
$N=5$ representative points were produced per setup. To compute $\Xi$, a grid discretization over the space was performed followed by a clustering for each connected component of this discretization. That is, the support $\mathcal{C}$ was discretized into $60$ bins per dimension. Each discretized point $c_k$ was assessed for membership in $\mathcal{C}(x)$, resulting in a collection of points $C$, from which we could
recover $\Xi$ in the manner described in Section \ref{section:method_rps}. Visualizations of the exact and approximate RPs are provided for tasks where $\mathcal{C}\subset\mathbb{R}^{2}$ in Appendix \ref{appendix:sbi_rps}.

\begin{figure}[H]
\centering
\includegraphics[scale=0.20]{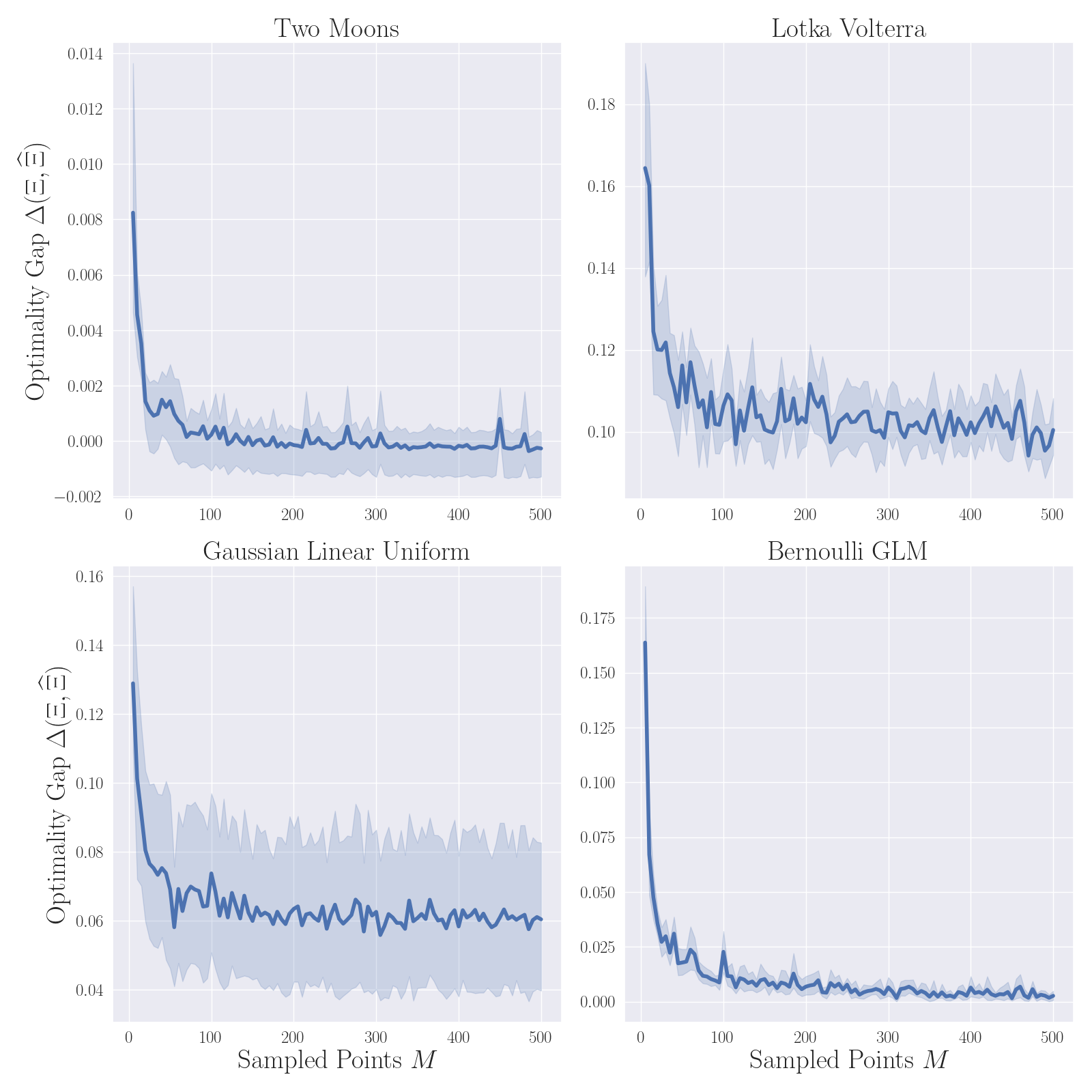}
\caption{\label{fig:sbi_results_rp} Suboptimality of the approximate representative points $\Delta(\Xi, \widehat{\Xi})$ decreases over increased sampling from the conformal prediction region. }
\end{figure}

To make explicit discretization possible, problems were projected into lower-dimensional versions, namely $\mathcal{C}\subset\mathbb{R}^4$. Figure \ref{fig:sbi_results_rp} demonstrates the suboptimality of $\widehat{\Xi}$ decreases with increasing samples. Of note is that this convergence is slower in higher dimensional problems: for low dimensional cases, recovery of optimal RPs happens for small $M$, meaning any fluctuations thereafter are noise, as seen in the Two Moons case.

\begin{table*}[t]
\caption{\label{table:vehicle_results} Coverage was assessed over 128 i.i.d. test samples and average objective optima over 10 i.i.d. test samples with standard deviations in parentheses.}
\resizebox{\textwidth}{!}{%
\begin{tabular}[t]{ccccccc}
\toprule
& Box & PTC-B & Ellipsoid & PTC-E & CPO & \textit{Nominal} \\
\midrule
Coverage & 0.94 & 0.93 & 0.94 & 0.92 & 0.94 & --- \\
Objective & 7863.45 (0.0) & 34559.03 (171.3) & 7038.77 (0.0) & 8807.68 (4.22) & \textbf{4171.22 (321.34)} & \textit{299.50 (0.0)} \\
\bottomrule
\end{tabular}
}
\end{table*}

\subsection{Robust Vehicle Routing}\label{section:experiments_routing}
Optimal routing is a long-standing point of interest in the operations research community, with widespread applications such as in resource distribution and urban traffic flow management \citep{mor2022vehicle, saberi2012continuous, okulewicz2019metaheuristic, kovrenavr2003vehicle}. 
We study the traffic flow problem from \citep{angelelli2021system}.

Recent work has demonstrated the utility of generative models in quantifying uncertainty for weather predictions over traditional physics-based approaches \citep{agrawal2019machine,ayzel2020rainnet,franch2020precipitation,shi2017deep}. We specifically leverage a latent diffusion model for such forecasting from \citep{leinonen2023latent}. Formally, a forecaster $\mathcal{P}(\widetilde{Y} \mid x)$ maps precipitation readings from radar networks $x\in\mathbb{R}^{T\times W\times H}$, specifically over $T$ time steps with resolutions $W\times H$, to $\widetilde{Y}\in \mathbb{R}^{W\times H}$, the precipitation for some fixed $\Delta T$ point beyond $x$.

We consider the robust traffic flow problem (RTFP) for a source-target pair $(s,t)$ over the network graph of Manhattan, where $|\mathcal{V}| = 4584$ and $|\mathcal{E}| = 9867$. The precipitation $\widetilde{Y}$ was combined with the nominal speed limits to obtain the final travel costs $c$ along edges, fully described in Appendix \ref{appendix:experiments_routing}. Formally, we seek
\begin{gather}\label{equation:flow_lp}
w^{*}(x) := \min_{w} \max_{\widehat{c}\in\mathcal{U}(x)} \widehat{c}^{T} w \\
\textrm{s.t.} w\in [0,1]^{\mathcal{E}}, Aw = b, \mathcal{P}_{X,C}(C\in\mathcal{U}(X)) \ge 1-\alpha \nonumber
\end{gather}
where $w_{e}$ represents the proportion of traffic routed along edge $e$, $C\in\mathbb{R}^{|\mathcal{E}|}$ is the edge weight vector, $A\in\mathbb{R}^{|\mathcal{V}|\times |\mathcal{E}|}$ is the node-arc incidence matrix, and $b\in\mathbb{R}^{|\mathcal{V}|}$ has entries $b_{s} = 1, b_{t} = -1,$ and $b_{k} = 0$ for $k\notin\{s, t\}$.


We again demonstrate the quantitative improvement in decision-making resulting from using the more informative CPO prediction regions. Experiments were conducted with $s$ and $t$ chosen uniformly at random from $\mathcal{V}$. We take the score as defined in Equation \ref{eqn:sbi_score} on the edge weight space rather than the initial precipitation map space. Results are shown in Table \ref{table:vehicle_results}. Again, although all approaches achieve coverage guarantees, bounds resulting from alternate regions are significantly looser compared to those from CPO. This is especially prominent in this task compared to those of Section \ref{section:experiments_sbi} due to the high dimension of the prediction space($\mathbb{R}^{|\mathcal{E}|}$) and complex nature of $\mathcal{P}(C|X)$.

\begin{figure}[H]
\centering
\includegraphics[scale=0.35]{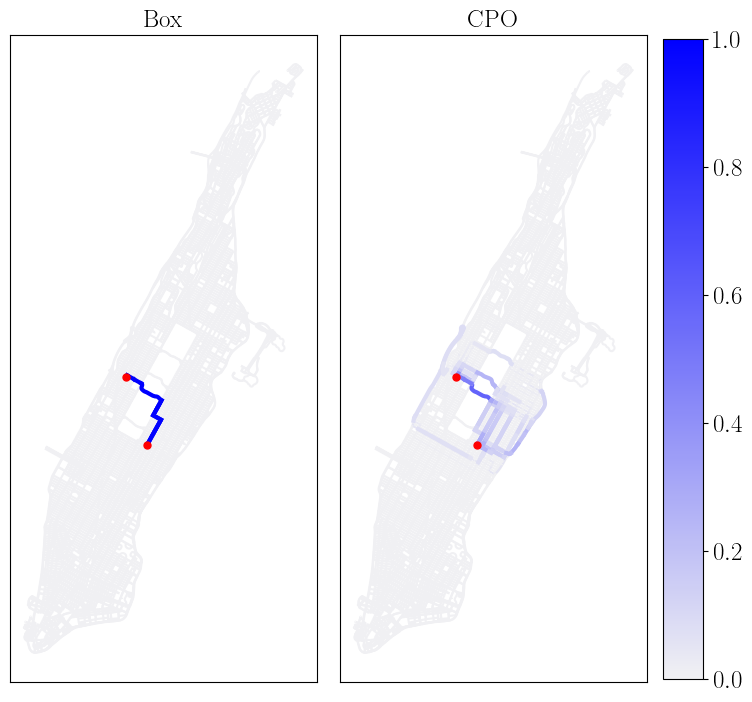}
\caption{\label{fig:robust_traffic_routing} Solutions for the RTFP under the Box (left) and CPO (right) uncertainty regions.}
\end{figure}

Notably, the formulation in Equation \ref{equation:flow_lp} is a relaxation of the standard LP formulation of the robust shortest paths problem (RSPP), in which $\mathcal{W} = \{0,1\}^{\mathcal{E}}$. Given that $A$ is a totally unimodular matrix, the solutions of the \textit{box}-constrained RTFP and RSPP are equivalent, i.e. for both Box and PTC-B; they, however, are \textit{not} equivalent under more general constraint sets \citep{chaerani2005robust}, i.e. Ellipsoid, PTC-E, and CPO, resulting in the observed suboptimality of box constraints. This is highlighted in Figure \ref{fig:robust_traffic_routing}, where the Box constraint results in a fully concentrated allocation of traffic along a single path.

Despite apparent quantitative improvements resulting from the CPO optimal solution, it is difficult to directly understand \textit{why} such allocations were deemed optimal without a qualitative impression of $\mathcal{U}(x)$, as framed in Section \ref{section:method_rps}. We, therefore, now construct $N=5$ representative points and their corresponding projections, two of which are visualized in Figure \ref{fig:travel_times}. 
The RPs highlight the multimodal nature of the edge weights distribution, where $\xi^{(1)}$ exhibits a case of precipitation more heavily concentrating along the northeast corridor across Manhattan and $\xi^{(2)}$ one where it concentrates on the west. In addition, the projection around $\xi^{(2)}$ reveals especially high uncertainty on the path through Central Park with less on surrounding roads. CPO, thus, hedges its allocation in Figure \ref{fig:robust_traffic_routing} more evenly across paths, unlike the concentrated allocation under the Box region.

\begin{figure}[H]
\centering
\includegraphics[scale=0.20]{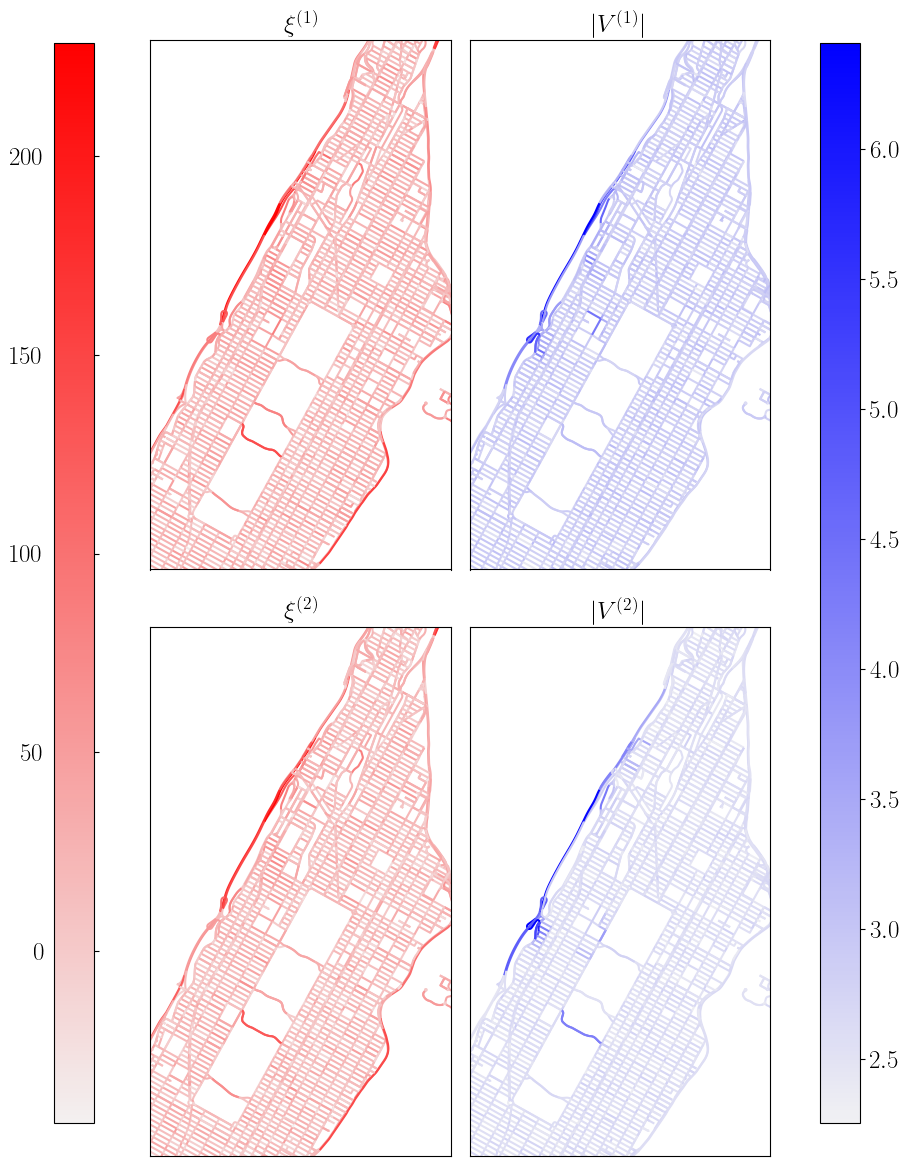}
\caption{\label{fig:travel_times} Two RPs for $\mathcal{C}(x)$ for travel time prediction (left) and the extents of their Voronoi cells (right).}
\end{figure}

\section{DISCUSSION}\label{section:discussion} 
We have presented \textproc{CPO}, a framework to leverage informative, non-convex conformal regions for predict-then-optimize decision-making. 
This work suggests many directions for future work. We are pursuing the use of CPO for \textit{sequential} decision-making, where non-exchangeable conformal prediction is required to handle sampling \citep{fannjiang2022conformal}. Another interesting extension would be applications of CPO to discrete objects using GFlowNets for conditional sampling \citep{malkin2022gflownets,hu2023gflownet}. 
Finally, leveraging CPO over function spaces would enable its use to distributionally robust optimization.

\clearpage
\printbibliography

@inproceedings{ambrogioni2019forward,
  title={Forward amortized inference for likelihood-free variational marginalization},
  author={Ambrogioni, Luca and G{\"u}{\c{c}}l{\"u}, Umut and Berezutskaya, Julia and Borne, Eva and G{\"u}{\c{c}}l{\"u}t{\"u}rk, Yaǧmur and Hinne, Max and Maris, Eric and Gerven, Marcel},
  booktitle={The 22nd International Conference on Artificial Intelligence and Statistics},
  pages={777--786},
  year={2019},
  organization={PMLR}
}

@Article{edelsbrunner1995union,
author={Edelsbrunner, H.},
title={The union of balls and its dual shape},
journal={Discrete {\&} Computational Geometry},
year={1995},
month={Jun},
day={01},
volume={13},
number={3},
pages={415-440},
issn={1432-0444},
doi={10.1007/BF02574053},
url={https://doi.org/10.1007/BF02574053}
}

@article{angelopoulos2021gentle,
  title={A gentle introduction to conformal prediction and distribution-free uncertainty quantification},
  author={Angelopoulos, Anastasios N and Bates, Stephen},
  journal={arXiv preprint arXiv:2107.07511},
  year={2021}
}

@article{hermans2021averting,
  title={Averting a crisis in simulation-based inference},
  author={Hermans, Joeri and Delaunoy, Arnaud and Rozet, Fran{\c{c}}ois and Wehenkel, Antoine and Louppe, Gilles},
  journal={arXiv preprint arXiv:2110.06581},
  year={2021}
}

@inproceedings{lueckmann2021benchmarking,
  title={Benchmarking simulation-based inference},
  author={Lueckmann, Jan-Matthis and Boelts, Jan and Greenberg, David and Goncalves, Pedro and Macke, Jakob},
  booktitle={International Conference on Artificial Intelligence and Statistics},
  pages={343--351},
  year={2021},
  organization={PMLR}
}

@article{yang2021finite,
  title={Finite-sample efficient conformal prediction},
  author={Yang, Yachong and Kuchibhotla, Arun Kumar},
  journal={arXiv preprint arXiv:2104.13871},
  year={2021}
}

@article{sesia2020comparison,
  title={A comparison of some conformal quantile regression methods},
  author={Sesia, Matteo and Cand{\`e}s, Emmanuel J},
  journal={Stat},
  volume={9},
  number={1},
  pages={e261},
  year={2020},
  publisher={Wiley Online Library}
}

@software{nflows,
  author       = {Conor Durkan and
                  Artur Bekasov and
                  Iain Murray and
                  George Papamakarios},
  title        = {{nflows}: normalizing flows in {PyTorch}},
  month        = nov,
  year         = 2020,
  publisher    = {Zenodo},
  version      = {v0.14},
  doi          = {10.5281/zenodo.4296287},
  url          = {https://doi.org/10.5281/zenodo.4296287}
}

@article{patel2023variational,
  title={Variational Inference with Coverage Guarantees},
  author={Patel, Yash and McNamara, Declan and Loper, Jackson and Regier, Jeffrey and Tewari, Ambuj},
  journal={arXiv preprint arXiv:2305.14275},
  year={2023}
}

@article{durkan2019neural,
  title={Neural spline flows},
  author={Durkan, Conor and Bekasov, Artur and Murray, Iain and Papamakarios, George},
  journal={Advances in Neural Information Processing Systems},
  volume={32},
  year={2019}
}

@article{kingma2014adam,
  title={Adam: A method for stochastic optimization},
  author={Kingma, Diederik P and Ba, Jimmy},
  journal={arXiv preprint arXiv:1412.6980},
  year={2014}
}

@article{paszke2019pytorch,
  title={{Pytorch}: An imperative style, high-performance deep learning library},
  author={Paszke, Adam and Gross, Sam and Massa, Francisco and Lerer, Adam and Bradbury, James and Chanan, Gregory and Killeen, Trevor and Lin, Zeming and Gimelshein, Natalia and Antiga, Luca and others},
  journal={Advances in Neural Information Processing Systems},
  volume={32},
  year={2019}
}

@article{wang2022probabilistic,
  title={Probabilistic conformal prediction using conditional random samples},
  author={Wang, Zhendong and Gao, Ruijiang and Yin, Mingzhang and Zhou, Mingyuan and Blei, David M},
  journal={arXiv preprint arXiv:2206.06584},
  year={2022}
}

@article{tumu2023physics,
  title={Physics constrained motion prediction with uncertainty quantification},
  author={Tumu, Renukanandan and Lindemann, Lars and Nghiem, Truong and Mangharam, Rahul},
  journal={arXiv preprint arXiv:2302.01060},
  year={2023}
}

@article{horwitz2022conffusion,
  title={Conffusion: Confidence intervals for diffusion models},
  author={Horwitz, Eliahu and Hoshen, Yedid},
  journal={arXiv preprint arXiv:2211.09795},
  year={2022}
}

@inproceedings{angelopoulos2022image,
  title={Image-to-image regression with distribution-free uncertainty quantification and applications in imaging},
  author={Angelopoulos, Anastasios N and Kohli, Amit Pal and Bates, Stephen and Jordan, Michael and Malik, Jitendra and Alshaabi, Thayer and Upadhyayula, Srigokul and Romano, Yaniv},
  booktitle={International Conference on Machine Learning},
  pages={717--730},
  year={2022},
  organization={PMLR}
}

@article{hu2022robust,
  title={Robust and scalable uncertainty estimation with conformal prediction for machine-learned interatomic potentials},
  author={Hu, Yuge and Musielewicz, Joseph and Ulissi, Zachary W and Medford, Andrew J},
  journal={Machine Learning: Science and Technology},
  volume={3},
  number={4},
  pages={045028},
  year={2022},
  publisher={IOP Publishing}
}

@article{mao2022valid,
  title={Valid model-free spatial prediction},
  author={Mao, Huiying and Martin, Ryan and Reich, Brian J},
  journal={Journal of the American Statistical Association},
  pages={1--11},
  year={2022},
  publisher={Taylor \& Francis}
}

@article{feldman2023calibrated,
  title={Calibrated multiple-output quantile regression with representation learning},
  author={Feldman, Shai and Bates, Stephen and Romano, Yaniv},
  journal={Journal of Machine Learning Research},
  volume={24},
  number={24},
  pages={1--48},
  year={2023}
}

@article{malkin2022gflownets,
  title={GFlowNets and variational inference},
  author={Malkin, Nikolay and Lahlou, Salem and Deleu, Tristan and Ji, Xu and Hu, Edward and Everett, Katie and Zhang, Dinghuai and Bengio, Yoshua},
  journal={arXiv preprint arXiv:2210.00580},
  year={2022}
}

@inproceedings{hu2023gflownet,
  title={GFlowNet-EM for learning compositional latent variable models},
  author={Hu, Edward J and Malkin, Nikolay and Jain, Moksh and Everett, Katie E and Graikos, Alexandros and Bengio, Yoshua},
  booktitle={International Conference on Machine Learning},
  pages={13528--13549},
  year={2023},
  organization={PMLR}
}

@article{dalenius1950problem,
  title={The problem of optimum stratification},
  author={Dalenius, Tore},
  journal={Scandinavian Actuarial Journal},
  volume={1950},
  number={3-4},
  pages={203--213},
  year={1950},
  publisher={Taylor \& Francis}
}

@article{dalenius1951problem,
  title={The problem of optimum stratification. II},
  author={Dalenius, Tore and Gurney, Margaret},
  journal={Scandinavian Actuarial Journal},
  volume={1951},
  number={1-2},
  pages={133--148},
  year={1951},
  publisher={Taylor \& Francis}
}

@article{flury1993representing,
  title={Representing a large collection of curves: A case for principal points},
  author={Flury, Bernard D and Tarpey, Thaddeus},
  journal={The American Statistician},
  volume={47},
  number={4},
  pages={304--306},
  year={1993},
  publisher={Taylor \& Francis}
}

@article{max1960quantizing,
  title={Quantizing for minimum distortion},
  author={Max, Joel},
  journal={IRE Transactions on Information Theory},
  volume={6},
  number={1},
  pages={7--12},
  year={1960},
  publisher={IEEE}
}

@article{fang2023review,
  title={A Review of Representative Points of Statistical Distributions and Their Applications},
  author={Fang, Kai-Tai and Pan, Jianxin},
  journal={Mathematics},
  volume={11},
  number={13},
  pages={2930},
  year={2023},
  publisher={MDPI}
}

@article{belhasin2023principal,
  title={Principal Uncertainty Quantification with Spatial Correlation for Image Restoration Problems},
  author={Belhasin, Omer and Romano, Yaniv and Freedman, Daniel and Rivlin, Ehud and Elad, Michael},
  journal={arXiv preprint arXiv:2305.10124},
  year={2023}
}

@article{sesia2021conformal,
  title={Conformal prediction using conditional histograms},
  author={Sesia, Matteo and Romano, Yaniv},
  journal={Advances in Neural Information Processing Systems},
  volume={34},
  pages={6304--6315},
  year={2021}
}

@article{izbicki2022cd,
  title={Cd-split and hpd-split: Efficient conformal regions in high dimensions},
  author={Izbicki, Rafael and Shimizu, Gilson and Stern, Rafael B},
  journal={The Journal of Machine Learning Research},
  volume={23},
  number={1},
  pages={3772--3803},
  year={2022},
  publisher={JMLRORG}
}

@inproceedings{johnstone2021conformal,
  title={Conformal uncertainty sets for robust optimization},
  author={Johnstone, Chancellor and Cox, Bruce},
  booktitle={Conformal and Probabilistic Prediction and Applications},
  pages={72--90},
  year={2021},
  organization={PMLR}
}

@article{shafer2008tutorial,
  title={A Tutorial on Conformal Prediction.},
  author={Shafer, Glenn and Vovk, Vladimir},
  journal={Journal of Machine Learning Research},
  volume={9},
  number={3},
  year={2008}
}

@article{elmachtoub2022smart,
  title={Smart “predict, then optimize”},
  author={Elmachtoub, Adam N and Grigas, Paul},
  journal={Management Science},
  volume={68},
  number={1},
  pages={9--26},
  year={2022},
  publisher={INFORMS}
}

@article{sun2023predict,
  title={Predict-then-Calibrate: A New Perspective of Robust Contextual LP},
  author={Sun, Chunlin and Liu, Linyu and Li, Xiaocheng},
  journal={arXiv preprint arXiv:2305.15686},
  year={2023}
}

@article{chenreddy2022data,
  title={Data-driven conditional robust optimization},
  author={Chenreddy, Abhilash Reddy and Bandi, Nymisha and Delage, Erick},
  journal={Advances in Neural Information Processing Systems},
  volume={35},
  pages={9525--9537},
  year={2022}
}

@article{ohmori2021predictive,
  title={A predictive prescription using minimum volume k-nearest neighbor enclosing ellipsoid and robust optimization},
  author={Ohmori, Shunichi},
  journal={Mathematics},
  volume={9},
  number={2},
  pages={119},
  year={2021},
  publisher={MDPI}
}

@inproceedings{elmachtoub2020decision,
  title={Decision trees for decision-making under the predict-then-optimize framework},
  author={Elmachtoub, Adam N and Liang, Jason Cheuk Nam and McNellis, Ryan},
  booktitle={International Conference on Machine Learning},
  pages={2858--2867},
  year={2020},
  organization={PMLR}
}

@inproceedings{pervsak2023contextual,
  title={Contextual robust optimisation with uncertainty quantification},
  author={Per{\v{s}}ak, Egon and Anjos, Miguel F},
  booktitle={International Conference on Integration of Constraint Programming, Artificial Intelligence, and Operations Research},
  pages={124--132},
  year={2023},
  organization={Springer}
}

@book{ben2009robust,
  title={Robust optimization},
  author={Ben-Tal, Aharon and El Ghaoui, Laurent and Nemirovski, Arkadi},
  volume={28},
  year={2009},
  publisher={Princeton university press}
}

@article{beyer2007robust,
  title={Robust optimization--a comprehensive survey},
  author={Beyer, Hans-Georg and Sendhoff, Bernhard},
  journal={Computer methods in applied mechanics and engineering},
  volume={196},
  number={33-34},
  pages={3190--3218},
  year={2007},
  publisher={Elsevier}
}

@article{cheramin2021data,
  title={Data-driven robust optimization using scenario-induced uncertainty sets},
  author={Cheramin, Meysam and Chen, Richard Li-Yang and Cheng, Jianqiang and Pinar, Ali},
  journal={arXiv preprint arXiv:2107.04977},
  year={2021}
}

@article{bertsimas2018data,
  title={Data-driven robust optimization},
  author={Bertsimas, Dimitris and Gupta, Vishal and Kallus, Nathan},
  journal={Mathematical Programming},
  volume={167},
  pages={235--292},
  year={2018},
  publisher={Springer}
}

@article{shang2019data,
  title={A data-driven robust optimization approach to scenario-based stochastic model predictive control},
  author={Shang, Chao and You, Fengqi},
  journal={Journal of Process Control},
  volume={75},
  pages={24--39},
  year={2019},
  publisher={Elsevier}
}

@inproceedings{chan2008slightly,
  title={A (slightly) faster algorithm for Klee's measure problem},
  author={Chan, Timothy M},
  booktitle={Proceedings of the twenty-fourth annual symposium on Computational geometry},
  pages={94--100},
  year={2008}
}

@article{agrawal2019machine,
  title={Machine learning for precipitation nowcasting from radar images},
  author={Agrawal, Shreya and Barrington, Luke and Bromberg, Carla and Burge, John and Gazen, Cenk and Hickey, Jason},
  journal={arXiv preprint arXiv:1912.12132},
  year={2019}
}

@article{ayzel2020rainnet,
  title={RainNet v1. 0: a convolutional neural network for radar-based precipitation nowcasting},
  author={Ayzel, Georgy and Scheffer, Tobias and Heistermann, Maik},
  journal={Geoscientific Model Development},
  volume={13},
  number={6},
  pages={2631--2644},
  year={2020},
  publisher={Copernicus GmbH}
}

@article{franch2020precipitation,
  title={Precipitation nowcasting with orographic enhanced stacked generalization: Improving deep learning predictions on extreme events},
  author={Franch, Gabriele and Nerini, Daniele and Pendesini, Marta and Coviello, Luca and Jurman, Giuseppe and Furlanello, Cesare},
  journal={Atmosphere},
  volume={11},
  number={3},
  pages={267},
  year={2020},
  publisher={MDPI}
}

@article{shi2017deep,
  title={Deep learning for precipitation nowcasting: A benchmark and a new model},
  author={Shi, Xingjian and Gao, Zhihan and Lausen, Leonard and Wang, Hao and Yeung, Dit-Yan and Wong, Wai-kin and Woo, Wang-chun},
  journal={Advances in neural information processing systems},
  volume={30},
  year={2017}
}

@article{leinonen2023latent,
  title={Latent diffusion models for generative precipitation nowcasting with accurate uncertainty quantification},
  author={Leinonen, Jussi and Hamann, Ulrich and Nerini, Daniele and Germann, Urs and Franch, Gabriele},
  journal={arXiv preprint arXiv:2304.12891},
  year={2023}
}

@inproceedings{chaerani2005robust,
  title={The robust shortest path problem by means of robust linear optimization},
  author={Chaerani, Diah and Roos, Cornelis and Aman, A},
  booktitle={Operations Research Proceedings 2004: Selected Papers of the Annual International Conference of the German Operations Research Society (GOR). Jointly Organized with the Netherlands Society for Operations Research (NGB) Tilburg, September 1--3, 2004},
  pages={335--342},
  year={2005},
  organization={Springer}
}

@article{boeing2017osmnx,
  title={OSMnx: New methods for acquiring, constructing, analyzing, and visualizing complex street networks},
  author={Boeing, Geoff},
  journal={Computers, Environment and Urban Systems},
  volume={65},
  pages={126--139},
  year={2017},
  publisher={Elsevier}
}

@article{fannjiang2022conformal,
  title={Conformal prediction under feedback covariate shift for biomolecular design},
  author={Fannjiang, Clara and Bates, Stephen and Angelopoulos, Anastasios N and Listgarten, Jennifer and Jordan, Michael I},
  journal={Proceedings of the National Academy of Sciences},
  volume={119},
  number={43},
  pages={e2204569119},
  year={2022},
  publisher={National Acad Sciences}
}

@article{sadana2023survey,
  title={A Survey of Contextual Optimization Methods for Decision Making under Uncertainty},
  author={Sadana, Utsav and Chenreddy, Abhilash and Delage, Erick and Forel, Alexandre and Frejinger, Emma and Vidal, Thibaut},
  journal={arXiv preprint arXiv:2306.10374},
  year={2023}
}

@article{doshi2017towards,
  title={Towards a rigorous science of interpretable machine learning},
  author={Doshi-Velez, Finale and Kim, Been},
  journal={arXiv preprint arXiv:1702.08608},
  year={2017}
}

@article{kaminski2019right,
  title={The right to explanation, explained},
  author={Kaminski, Margot E},
  journal={Berkeley Technology Law Journal},
  volume={34},
  number={1},
  pages={189--218},
  year={2019},
  publisher={JSTOR}
}

@article{muller1959note,
  title={A note on a method for generating points uniformly on n-dimensional spheres},
  author={Muller, Mervin E},
  journal={Communications of the ACM},
  volume={2},
  number={4},
  pages={19--20},
  year={1959},
  publisher={ACM New York, NY, USA}
}

@book{fishman2013monte,
  title={Monte Carlo: concepts, algorithms, and applications},
  author={Fishman, George},
  year={2013},
  publisher={Springer Science \& Business Media}
}

@article{mor2022vehicle,
  title={Vehicle routing problems over time: a survey},
  author={Mor, Andrea and Speranza, Maria Grazia},
  journal={Annals of Operations Research},
  volume={314},
  number={1},
  pages={255--275},
  year={2022},
  publisher={Springer}
}

@article{saberi2012continuous,
  title={Continuous approximation model for the vehicle routing problem for emissions minimization at the strategic level},
  author={Saberi, Meead and Verbas, {\.I} {\"O}mer},
  journal={Journal of Transportation Engineering},
  volume={138},
  number={11},
  pages={1368--1376},
  year={2012},
  publisher={American Society of Civil Engineers}
}

@article{okulewicz2019metaheuristic,
  title={A metaheuristic approach to solve dynamic vehicle routing problem in continuous search space},
  author={Okulewicz, Micha{\l} and Ma{\'n}dziuk, Jacek},
  journal={Swarm and Evolutionary Computation},
  volume={48},
  pages={44--61},
  year={2019},
  publisher={Elsevier}
}

@article{kovrenavr2003vehicle,
  title={Vehicle routing problem with stochastic demands},
  author={Ko{\v{r}}en{\'a}{\v{r}}, V{\'a}clav},
  journal={ALLOCATION FRAGMENTS OF THE DISTRIBUTED DATABASE},
  pages={24},
  year={2003}
}

@article{angelelli2021system,
  title={System optimal routing of traffic flows with user constraints using linear programming},
  author={Angelelli, Enrico and Morandi, Valentina and Savelsbergh, Martin and Speranza, Maria Grazia},
  journal={European journal of operational research},
  volume={293},
  number={3},
  pages={863--879},
  year={2021},
  publisher={Elsevier}
}

\newpage
\section*{Checklist}

 \begin{enumerate}

 \item For all models and algorithms presented, check if you include:
 \begin{enumerate}
   \item A clear description of the mathematical setting, assumptions, algorithm, and/or model. [Yes]
   \item An analysis of the properties and complexity (time, space, sample size) of any algorithm. [Yes]
   \item (Optional) Anonymized source code, with specification of all dependencies, including external libraries. [No, will be released upon acceptance]
 \end{enumerate}

 \item For any theoretical claim, check if you include:
 \begin{enumerate}
   \item Statements of the full set of assumptions of all theoretical results. [Yes]
   \item Complete proofs of all theoretical results. [Yes]
   \item Clear explanations of any assumptions. [Yes]     
 \end{enumerate}

 \item For all figures and tables that present empirical results, check if you include:
 \begin{enumerate}
   \item The code, data, and instructions needed to reproduce the main experimental results (either in the supplemental material or as a URL). [Yes]
   \item All the training details (e.g., data splits, hyperparameters, how they were chosen). [Yes]
         \item A clear definition of the specific measure or statistics and error bars (e.g., with respect to the random seed after running experiments multiple times). [Yes]
         \item A description of the computing infrastructure used. (e.g., type of GPUs, internal cluster, or cloud provider). [Yes]
 \end{enumerate}

 \item If you are using existing assets (e.g., code, data, models) or curating/releasing new assets, check if you include:
 \begin{enumerate}
   \item Citations of the creator If your work uses existing assets. [Not Applicable]
   \item The license information of the assets, if applicable. [Not Applicable]
   \item New assets either in the supplemental material or as a URL, if applicable. [Not Applicable]
   \item Information about consent from data providers/curators. [Not Applicable]
   \item Discussion of sensible content if applicable, e.g., personally identifiable information or offensive content. [Not Applicable]
 \end{enumerate}

 \item If you used crowdsourcing or conducted research with human subjects, check if you include:
 \begin{enumerate}
   \item The full text of instructions given to participants and screenshots. [Not Applicable]
   \item Descriptions of potential participant risks, with links to Institutional Review Board (IRB) approvals if applicable. [Not Applicable]
   \item The estimated hourly wage paid to participants and the total amount spent on participant compensation. [Not Applicable]
 \end{enumerate}

 \end{enumerate}

\newpage
\onecolumn
\appendix




\section{Prediction Region Validity Lemma}\label{section:coverage_bound}
\begin{lemma}
    Consider any $f(w, c)$ that is $L$-Lipschitz in $c$ under the metric $d$ for any fixed $w$. Assume further that $\mathcal{P}_{X,C}(C \in \mathcal{U}(X)) \ge 1 - \alpha$. Then, 

    \begin{equation}
        \mathcal{P}_{X,C}\left(\Delta(X, C) \le L \mathrm{\ diam}(\mathcal{U}(X))\right) \ge 1 - \alpha.
    \end{equation}
\end{lemma}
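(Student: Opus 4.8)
The plan is to condition on the coverage event $E := \{C \in \mathcal{U}(X)\}$, which by hypothesis satisfies $\mathcal{P}_{X,C}(E) \ge 1-\alpha$, and to prove the \emph{deterministic} two-sided bound $0 \le \Delta(x,c) \le L\,\mathrm{diam}(\mathcal{U}(x))$ for every pair $(x,c)$ with $c \in \mathcal{U}(x)$. The probability claim then follows by the inclusion $E \subseteq \{0 \le \Delta(X,C) \le L\,\mathrm{diam}(\mathcal{U}(X))\}$ together with monotonicity of probability.

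For the lower bound, fix $(x,c)$ with $c \in \mathcal{U}(x)$. Then for every $w \in \mathcal{W}$ we have $\max_{\widehat{c}\in\mathcal{U}(x)} f(w,\widehat{c}) \ge f(w,c)$, since $c$ itself is feasible in the inner maximization. Taking the minimum over $w \in \mathcal{W}$ on both sides preserves the inequality, giving $\min_w \max_{\widehat{c}\in\mathcal{U}(x)} f(w,\widehat{c}) \ge \min_w f(w,c)$, i.e. $\Delta(x,c) \ge 0$.

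For the upper bound, let $w_c$ be a minimizer of $w \mapsto f(w,c)$ over the compact set $\mathcal{W}$ (existing by continuity of $f$; an $\varepsilon$-approximate minimizer with slack sent to zero works if one prefers not to assume attainment). Since $w_c$ is feasible for the outer minimization,
\begin{equation*}
\min_{w}\max_{\widehat{c}\in\mathcal{U}(x)} f(w,\widehat{c}) \le \max_{\widehat{c}\in\mathcal{U}(x)} f(w_c,\widehat{c}).
\end{equation*}
Now fix any $\widehat{c}\in\mathcal{U}(x)$; because $c \in \mathcal{U}(x)$ as well and $f(w_c,\cdot)$ is $L$-Lipschitz under $d$, we get $f(w_c,\widehat{c}) \le f(w_c,c) + L\, d(\widehat{c},c) \le f(w_c,c) + L\,\mathrm{diam}(\mathcal{U}(x))$. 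Taking the supremum over $\widehat{c}\in\mathcal{U}(x)$ and using $f(w_c,c) = \min_w f(w,c)$ yields $\max_{\widehat{c}\in\mathcal{U}(x)} f(w_c,\widehat{c}) \le \min_w f(w,c) + L\,\mathrm{diam}(\mathcal{U}(x))$, hence $\Delta(x,c) \le L\,\mathrm{diam}(\mathcal{U}(x))$.

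I do not anticipate a genuine obstacle: the only points requiring slight care are the direction of the inequality when passing the minimum through (in the lower bound) and the bookkeeping around existence of the inner minimizer $w_c$ (in the upper bound), both routine given compactness of $\mathcal{W}$ and Lipschitzness of $f$. It is worth noting that if $\mathrm{diam}(\mathcal{U}(x)) = \infty$ the upper bound is vacuous, so no integrability or boundedness assumption on the diameter is needed; the statement holds verbatim.
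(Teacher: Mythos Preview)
Your proposal is correct and follows essentially the same route as the paper: condition on the coverage event $\{C\in\mathcal{U}(X)\}$ and establish the deterministic bound $\Delta(x,c)\le L\,\mathrm{diam}(\mathcal{U}(x))$ via the Lipschitz property, with the only cosmetic difference being that you fix the specific minimizer $w_c$ whereas the paper invokes the inequality $\min_w g(w)-\min_w h(w)\le \max_w|g(w)-h(w)|$ directly. Your explicit treatment of the lower bound $\Delta\ge 0$ is a welcome addition, as the paper's appendix proof actually omits it.
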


\begin{proof}
    We consider the event of interest conditionally on a pair $(x, c)$ where $c\in\mathcal{U}(x)$:
    
    \begin{gather*}
        \min_{w} \max_{\widehat{c}\in\mathcal{U}(x)} f(w, \widehat{c}) - \min_{w} f(w, c) \\
        \le \max_{w} | \max_{\widehat{c} \in \mathcal{U}(x)} f(w, \widehat{c}) - f(w, c) | \\ 
        \le L \max_{\widehat{c} \in \mathcal{U}(x)} d(\widehat{c}, c) 
        \le L \mathrm{diam}(\mathcal{U}(x)).
    \end{gather*}

    Since we have the assumption that $\mathcal{P}(C \in \mathcal{U}(X)) \ge 1 - \alpha$, the result immediately follows.
\end{proof}

\section{Optimization Convergence Lemma}\label{section:convergence}
We first begin by citing a standard result of projected gradient descent, from which the result of interest immediately follows.

\begin{lemma}\label{lemma:pgd_convergence}
    Let $K$ be a closed convex set, and $f : K \rightarrow \mathbb{R}$ be convex, differentiable, and $L$-Lipschitz. Let $x^{*}\in K$ be a minimizer of $f$, and define $T := \frac{L^2 || x_0 - x^{*} ||}{\epsilon^2}$ and $\eta := \frac{|| x_0 - x^{*} ||}{L \sqrt{T}}$. Then the iterates $\{x_{t}\}_{t=0}^{T}$ returned by projected gradient descent satisfy

    \begin{equation}
        f\left(\frac{1}{T+1}\sum_{t=0}^{T} x_{t}\right) - f(x^{*}) \le \epsilon.
    \end{equation}
\end{lemma}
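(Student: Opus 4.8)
The plan is to give the textbook distance-to-optimum (potential function) analysis for projected gradient descent, tracking the quantity $\|x_t - x^\ast\|^2$ across iterations. First I would make the update rule explicit, writing $x_{t+1} = \Pi_K(x_t - \eta g_t)$ with $g_t := \nabla f(x_t)$, and record the single geometric fact I need about the feasible set: since $K$ is closed and convex and $x^\ast \in K$, the Euclidean projection $\Pi_K$ is nonexpansive, so $\|\Pi_K(y) - x^\ast\| \le \|y - x^\ast\|$ for every $y$. This lets me discard the projection when bounding the progress toward $x^\ast$.

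The core step is a single per-iteration inequality. Applying nonexpansiveness and then expanding the square gives $\|x_{t+1} - x^\ast\|^2 \le \|x_t - \eta g_t - x^\ast\|^2 = \|x_t - x^\ast\|^2 - 2\eta \langle g_t, x_t - x^\ast\rangle + \eta^2 \|g_t\|^2$. I would then invoke convexity of $f$ in the first-order form $f(x_t) - f(x^\ast) \le \langle g_t, x_t - x^\ast\rangle$ to replace the inner product, and the $L$-Lipschitz assumption in the form $\|g_t\| \le L$ to control the last term. Rearranging yields the key bound $2\eta\big(f(x_t) - f(x^\ast)\big) \le \|x_t - x^\ast\|^2 - \|x_{t+1} - x^\ast\|^2 + \eta^2 L^2$.

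Summing this over $t = 0, \dots, T$ makes the distance terms telescope, leaving $2\eta \sum_{t=0}^{T}\big(f(x_t) - f(x^\ast)\big) \le \|x_0 - x^\ast\|^2 + (T+1)\eta^2 L^2$, where I drop the nonnegative terminal term $\|x_{T+1} - x^\ast\|^2$. Dividing by $2\eta(T+1)$ bounds the average suboptimality, and one application of Jensen's inequality (again convexity of $f$) moves the average inside $f$, giving $f\big(\tfrac{1}{T+1}\sum_{t=0}^T x_t\big) - f(x^\ast) \le \tfrac{\|x_0 - x^\ast\|^2}{2\eta(T+1)} + \tfrac{\eta L^2}{2}$. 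Substituting the prescribed $\eta = \|x_0 - x^\ast\|/(L\sqrt{T})$ balances the two terms, after which the stated choice of $T$ collapses the right-hand side to $\epsilon$.

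There is no deep obstacle here, as the result is a classical one; the proof is essentially the assembly of three standard ingredients. The only points requiring genuine care are the nonexpansiveness of the projection, which is the lone geometric lemma justifying the removal of $\Pi_K$ from the progress bound, and the final parameter bookkeeping — in particular using $\sqrt{T}/(T+1) \le 1/\sqrt{T}$ so that both $R^2/(2\eta(T+1))$ and $\eta L^2/2$ are each dominated by $\|x_0 - x^\ast\| L / (2\sqrt{T})$ before the prescribed $T$ and $\eta$ are plugged in.
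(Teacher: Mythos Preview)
Your argument is correct and is exactly the standard potential-function proof one finds in any convex optimization text. The paper, by contrast, does not prove this lemma at all: it simply states it as ``a standard result of projected gradient descent'' and immediately uses it as a black box to prove the subsequent lemma about $\phi(w)$. So you have supplied strictly more than the paper does, and there is nothing to compare approach-wise.

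One bookkeeping caution: the lemma as stated in the paper has $T = L^2\|x_0 - x^\ast\|/\epsilon^2$, whereas your derivation (correctly) reaches the bound $L\|x_0 - x^\ast\|/\sqrt{T}$, which collapses to $\epsilon$ only when $T = L^2\|x_0 - x^\ast\|^2/\epsilon^2$. This is almost certainly a typo in the paper's statement rather than a flaw in your proof; just be aware that your final ``plug in and collapse to $\epsilon$'' step silently relies on the squared norm.
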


\begin{lemma}
    Let $\phi(w) := \max_{\widehat{c} \in \bigcup_{k=1}^{K} \mathcal{B}_{\widehat{q}}(\widehat{c}_{k})} f(w, \widehat{c})$ for $\{\widehat{c}_{k}\}_{k=1}^{K}\subset\mathcal{C}$, $\widehat{q}\in\mathbb{R}^{+}$, and $f(w, c)$ convex-concave and $L$-Lipschitz in $c$ for any fixed $w$. Let $w^{*}\in \mathcal{W}$ be a minimizer of $\phi$. For any $\epsilon > 0$, define $T := \frac{L^2 || w_0 - w^{*} ||}{\epsilon^2}$ and $\eta := \frac{|| w_0 - w^{*} ||}{L \sqrt{T}}$. Then the iterates $\{w_{t}\}_{t=0}^{T}$ returned by Algorithm \ref{alg:CPO-Opt} satisfy

    \begin{equation}
        \phi\left(\frac{1}{T+1}\sum_{t=0}^{T} w_{t}\right) -  \phi(w^{*}) \le \epsilon.
    \end{equation}
\end{lemma}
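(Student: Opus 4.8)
The plan is to derive Lemma~\ref{lemma:pgd_convergence_main} as an immediate corollary of the standard projected gradient descent result, Lemma~\ref{lemma:pgd_convergence}, by verifying that the objective $\phi$ produced by Algorithm~\ref{alg:CPO-Opt} satisfies the three hypotheses of that result: it is convex, (sub)differentiable with the gradient used in the algorithm being a valid (sub)gradient, and $L$-Lipschitz, all over the closed convex constraint set $\mathcal{W}$. Once these are in place, the conclusion is exactly the conclusion of Lemma~\ref{lemma:pgd_convergence} with $K = \mathcal{W}$, $x_t = w_t$, $f = \phi$, and the same choices of $T$ and $\eta$.

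First I would write $\phi(w) = \max_{k=1,\dots,K} \phi_k(w)$ where $\phi_k(w) := \max_{\widehat{c} \in \mathcal{B}_{\widehat{q}}(\widehat{c}_k)} f(w, \widehat{c})$, using the decomposition of $\bigcup_k \mathcal{B}_{\widehat{q}}(\widehat{c}_k)$ into balls already justified in Section~\ref{section:optimization}. For each fixed $k$, $\mathcal{B}_{\widehat{q}}(\widehat{c}_k)$ is a compact convex set, so by Danskin's theorem (invoked in Section~\ref{section:optimization}) $\phi_k$ is convex in $w$, and a subgradient is given by $\nabla_w f(w, c_k^{*})$ where $c_k^{*} \in \argmax_{\widehat{c}\in\mathcal{B}_{\widehat{q}}(\widehat{c}_k)} f(w,\widehat{c})$ — precisely line~4 of the algorithm. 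Then $\phi = \max_k \phi_k$ is convex as a pointwise maximum of finitely many convex functions, and a subgradient of $\phi$ at $w$ is a subgradient of any $\phi_{k}$ attaining the maximum; line~5 of the algorithm selects $c^{*} = c_{k^\star}^{*}$ for the maximizing index $k^\star$, so $\nabla_w f(w, c^{*})$ is a valid subgradient of $\phi$, which is exactly the descent direction used in line~6. This identifies the algorithm's iterates with subgradient projected descent on $\phi$ over $\mathcal{W}$.

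Next I would establish the Lipschitz bound. For fixed $w$, $f(\cdot\,, w)$ is $L$-Lipschitz in $c$ under $d$ by assumption; I would show each $\phi_k$ is $L$-Lipschitz in $w$ — this needs $f$ to be $L$-Lipschitz in $w$ as well, which is the point I would flag as the one requiring care (see below) — hence $\phi = \max_k \phi_k$ is $L$-Lipschitz in $w$, since $|\max_k \phi_k(w) - \max_k \phi_k(w')| \le \max_k |\phi_k(w) - \phi_k(w')| \le L\|w - w'\|$. Crucially, as emphasized in Section~\ref{section:k_selection}, this constant is $L$ regardless of $K$. Finally, $\mathcal{W}$ is compact convex (hence closed convex), so all hypotheses of Lemma~\ref{lemma:pgd_convergence} hold and the stated guarantee follows verbatim.

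The main obstacle is a mild mismatch in hypotheses: Lemma~\ref{lemma:pgd_convergence} requires the objective to be $L$-Lipschitz \emph{in the decision variable} $w$, whereas the stated assumption on $f$ is Lipschitzness in $c$. One must supply (or the Lipschitzness-in-$w$ of $f$ must be read as implicit in ``convex-concave and sufficiently regular'' together with compactness of $\mathcal{W}$, which bounds $\|\nabla_w f\|$ on the relevant domain) the fact that $f(\cdot, c)$ has bounded subgradients in $w$ uniformly over $w \in \mathcal{W}$ and $c \in \bigcup_k \mathcal{B}_{\widehat{q}}(\widehat{c}_k)$; on a compact domain this follows from continuity of $\nabla_w f$, and one would then use that uniform bound in place of $L$ in $T$ and $\eta$. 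Modulo this bookkeeping, the proof is a direct application of the cited PGD theorem and the pointwise-maximum-preserves-convexity-and-Lipschitzness argument.
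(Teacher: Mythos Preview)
Your proposal is correct and follows essentially the same route as the paper: decompose $\phi = \max_k \phi_k$, invoke Danskin for convexity and the subgradient identity, argue that the pointwise maximum of $L$-Lipschitz functions is $L$-Lipschitz, and then apply the standard projected (sub)gradient descent guarantee of Lemma~\ref{lemma:pgd_convergence}. The mismatch you flag---that the stated hypothesis gives $L$-Lipschitzness of $f$ in $c$, not in $w$, whereas the PGD lemma needs the latter---is real, and the paper's own proof glosses over exactly this point by simply asserting ``$\phi_k(w)$ is $L$-Lipschitz by assumption on the structure of $f$''; your suggested fix via compactness of $\mathcal{W}$ and the ball union together with continuity of $\nabla_w f$ is the natural way to close it.
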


\begin{proof}
    Notice that $\phi(w)$ is convex by Danskin's Theorem by assumption of the convexity of $f$ in $w$. By Danskin's Theorem, $\nabla_{w} \phi(w) = \nabla_{w} f(w, c^{*})$, where $c^{*} := \max_{\widehat{c} \in \mathcal{C}(x)} f(w, \widehat{c})$. Further notice
    
    \begin{equation}
        \phi(w) 
        := \max_{\widehat{c} \in \mathcal{C}(x)} f(w, \widehat{c})
        = \max_{k} \max_{\widehat{c} \in \mathcal{B}_{\widehat{q}}(\widehat{c}_{k})} f(w, \widehat{c}).
    \end{equation}

    Denote $\phi_{k}(w) := \max_{\widehat{c} \in \mathcal{B}_{\widehat{q}}(\widehat{c}_{k})} f(w, \widehat{c})$. Clearly, $\phi_{k}(w)$ is $L$-Lipschitz by assumption on the structure of $f$. Further, as the point-wise maximum of $L$-Lipschitz functions is itself $L$-Lipschitz, it follows that $\phi(w) = \max_{k} \phi_{k}(w)$ is also $L$-Lipschitz. The conclusion, thus, follows by applying Lemma \ref{lemma:pgd_convergence} to $\phi(w)$.
\end{proof}

\section{Simulation-Based Inference Benchmarks}\label{section:benchmark}
The benchmark tasks are a subset of those provided by \citep{lueckmann2021benchmarking}. For convenience, we provide brief descriptions of the tasks curated by this library; however, a more comprehensive description of these tasks can be found in their manuscript.

\subsection{Gaussian Linear}
10-dimensional Gaussian model with a Gaussian prior:

\begin{gather*}
    \text{\textbf{Prior}: } \mathcal{N}(0, 0.1 \odot I)\\
    \text{\textbf{Simulator}: } x\mid w \sim \mathcal{N}(x\mid w, 0.1 \odot I)
\end{gather*}

\subsection{Gaussian Linear Uniform}
10-dimensional Gaussian model with a uniform prior:

\begin{gather*}
    \text{\textbf{Prior}: } \mathcal{U}(-1, 1)\\
    \text{\textbf{Simulator}: } x\mid w \sim \mathcal{N}(x\mid w, 0.1 \odot I)
\end{gather*}



\subsection{SLCP with Distractors}
Simple Likelihood Complex Posterior (SLCP) with Distractors has uninformative dimensions in the observation over the standard SLCP task:

\begin{gather*}
    \text{\textbf{Prior}: } \mathcal{U}(-3, 3)  \\
    \text{\textbf{Simulator}: } x\mid w = p(y)
    \text{ where } p \text{ reorders } \\
    y \text{ with a fixed random order } \\
    y_{[1:8]} \sim \mathcal{N}\left(
        \begin{bmatrix}
        w_1 \\ 
        w_2
        \end{bmatrix}, 
    \begin{bmatrix}
        w_3^4 & w_3^2 w_4^2 \tanh(w_5) \\ 
        w_3^2 w_4^2 \tanh(w_5) & w_4^4
    \end{bmatrix}\right),\\
    y_{9:100} \sim \frac{1}{20} \sum_{i=1}^{20} t_2(\mu^i, \Sigma^i), 
    \mu^i\sim\mathcal{N}(0, 15^2 I),  \\
    \Sigma^i_{j,k}\sim\mathcal{N}(0, 9),
    \Sigma^i_{j,j} = 3 e^{a}, a\sim\mathcal{N}(0, 1),
\end{gather*}



\subsection{Bernoulli GLM Raw}
10-parameter GLM with Bernoulli observations and Gaussian prior. Observations are not sufficient statistics, unlike the standard ``Bernoulli GLM'' task:

\begin{gather*}
    \text{\textbf{Prior}: } 
    \beta\sim\mathcal{N}(0, 2), f\sim\mathcal{N}(0, (F^T F)^{-1}) \\
    \qquad F_{i,i-2} = 1, F_{i,i-1} = -2 \\
    F_{i,i} = 1 + \sqrt{\frac{i-1}{9}}, F_{i, j} = 0; i\le j \\
    \text{\textbf{Simulator}: } x^{(i)} \mid w \sim \text{Bern}(\eta(v^{(i)}_T f+ \beta)), \\ \eta(\odot) = \exp(\odot) / (1 + \exp(\odot)) 
\end{gather*}

\subsection{Gaussian Mixture}
A mixture of two Gaussians, with one having a much broader covariance structure:

\begin{gather*}
    \text{\textbf{Prior}: } 
    \beta\sim\mathcal{U}(-10, 10) \\
    \text{\textbf{Simulator}: } x \mid w \sim 0.5 \mathcal{N}(x\mid w, I) + 0.5 \mathcal{N}(x\mid w, .01 I)
\end{gather*}

\subsection{Two Moons}
Task with a posterior that has both global (bimodal) and local (crescent-shaped) structure:

\begin{gather*}
    \text{\textbf{Prior}: } 
    \beta\sim\mathcal{U}(-1, 1) \\
    \text{\textbf{Simulator}: } x \mid w =  \\
    \begin{bmatrix}
        r \cos(\alpha) + 0.25 \\ 
        r \sin(\alpha)
    \end{bmatrix} +
    \begin{bmatrix}
        -|w_1 + w_2| / \sqrt{2} \\ 
        (-w_1 + w_2) / \sqrt{2}
    \end{bmatrix}\\
    \qquad \alpha\sim \mathcal{U}(-\pi/2,\pi/2), r\sim\mathcal{N}(0.1,0.01^2)
\end{gather*}

\subsection{SIR}
Epidemiology model with $S$ (susceptible), $I$ (infected), and $R$ (recovered). A contact rate $\beta$ and mean recovery rate of $\gamma$ are used as follows:

\begin{gather*}
    \text{\textbf{Prior}: } 
    \beta\sim\text{LogNormal}(\log(0.4),0.5), \\
    \gamma\sim\text{LogNormal}(\log(1/8),0.2) \\
    \text{\textbf{Simulator}: } x = (x^{(i)})_{i=1}^{10}; x^{(i)} \mid w\sim\text{Bin}(1000, \frac{I}{N}),  \\
    \text{ where } I \text{ is simulated from: } 
    \\
    \frac{dS}{dt} = -\beta \frac{SI}{N}, \qquad
    \frac{dI}{dt} = \beta \frac{SI}{N} - \gamma I, \qquad
    \frac{dR}{dt} = \gamma I
\end{gather*}

\subsection{Lotka-Volterra}
An ecological model commonly used in describing dynamics of competing species. $w$ parameterizes this interaction as $w = (\alpha, \beta, \gamma, \delta)$:

\begin{gather*}
    \text{\textbf{Prior}: } 
    \alpha\sim\text{LogNormal}(-.125,0.5) \\
    \beta\sim\text{LogNormal}(-3,0.5),
    \gamma\sim\text{LogNormal}(-.125,0.5) \\
    \delta\sim\text{LogNormal}(-3,0.5) \\
    \text{\textbf{Simulator}: } x = (x^{(i)})_{i=1}^{10},  \\
    x_{1,i} \mid w\sim\text{LogNormal}(\log(X), 0.1),  \\
    x_{2,i} \mid w\sim\text{LogNormal}(\log(Y), 0.1) \\ 
    \text{ where } X,Y \text{ is simulated from: }  \\
    \frac{dX}{dt} = \alpha X - \beta X Y, \qquad
    \frac{dY}{dt} = - \gamma Y + \delta XY
\end{gather*}

\section{Training Details}\label{section:exp_details}
All encoders were implemented in PyTorch \citep{paszke2019pytorch} with a Neural Spline Flow architecture. The NSF was built using code from \citep{nflows}. Specific architecture hyperparameter choices were taken to be the defaults from \citep{nflows} and are available in the code. Optimization was done using Adam \citep{kingma2014adam} with a learning rate of $10^{-3}$ over 5,000 training steps. Minibatches were drawn from the corresponding prior $\mathcal{P}(Y)$ and simulator $\mathcal{P}(X\mid Y)$ as specified per task in the preceding section. Training these models required between 10 minutes and two hours using an Nvidia RTX 2080 Ti GPUs for each of the SBI tasks.

\section{Posteriors}\label{section:posteriors}
We provide visualizations of approximate and reference posteriors (produced with MCMC from \citep{lueckmann2021benchmarking}).

\subsection{Gaussian Linear}
\begin{figure}[H]
\includegraphics[scale=0.1]{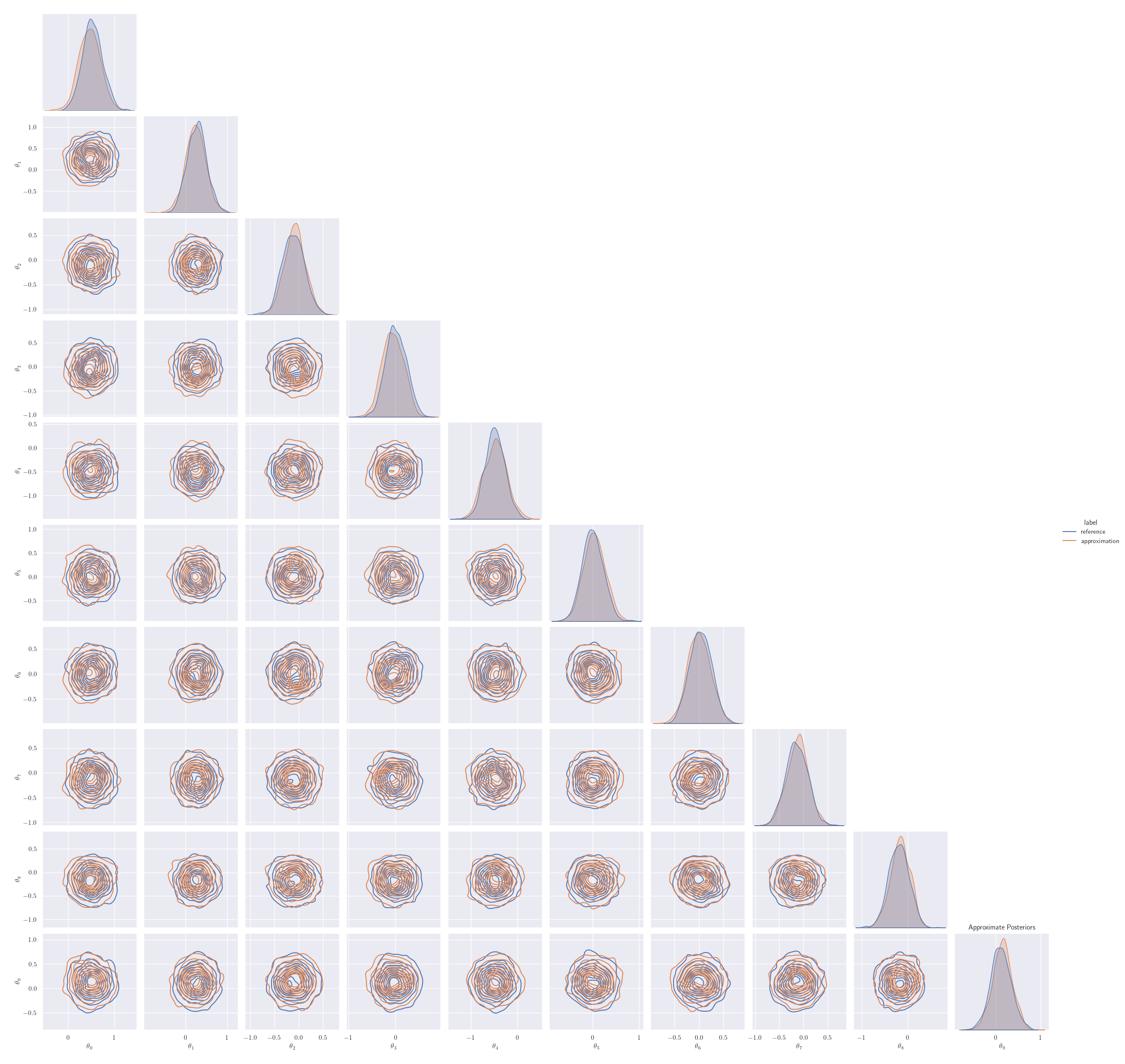}
\centering
\end{figure}

\subsection{Gaussian Mixture}
\begin{figure}[H]
\includegraphics[scale=0.4]{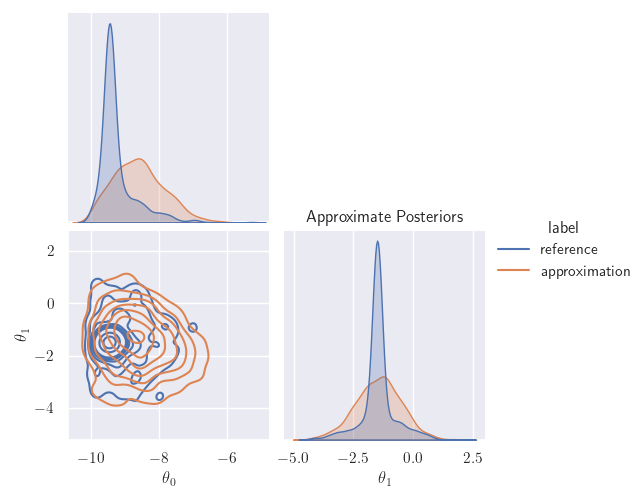}
\centering
\end{figure}

\subsection{Gaussian Linear Uniform}
\begin{figure}[H]
\includegraphics[scale=0.1]{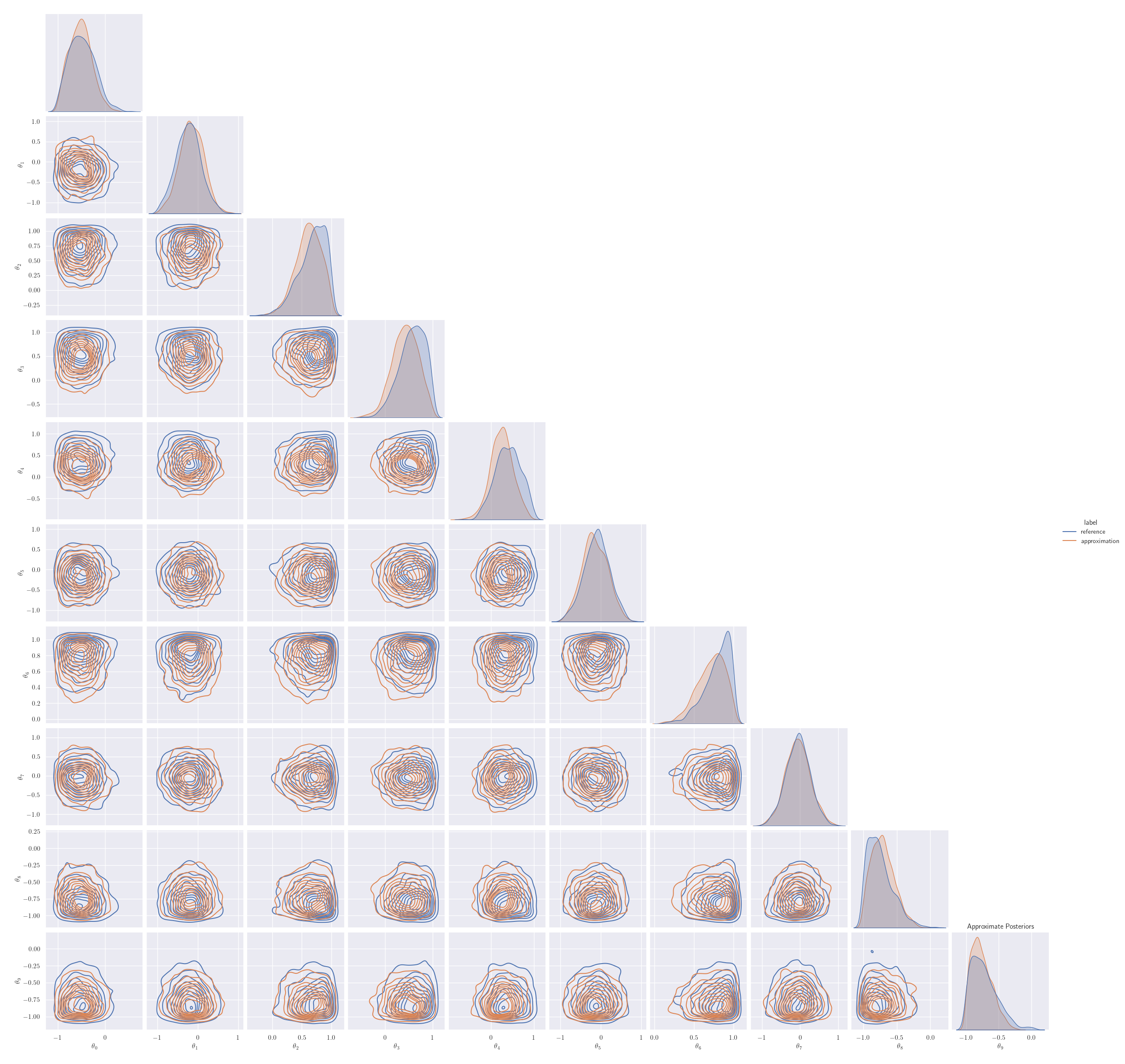}
\centering
\end{figure}

\subsection{Two Moons}
\begin{figure}[H]
\includegraphics[scale=0.4]{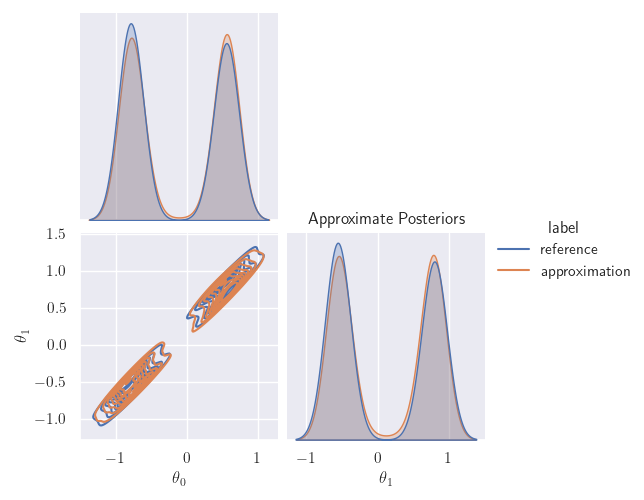}
\centering
\end{figure}

\subsection{SLCP}
\begin{figure}[H]
\includegraphics[scale=0.20]{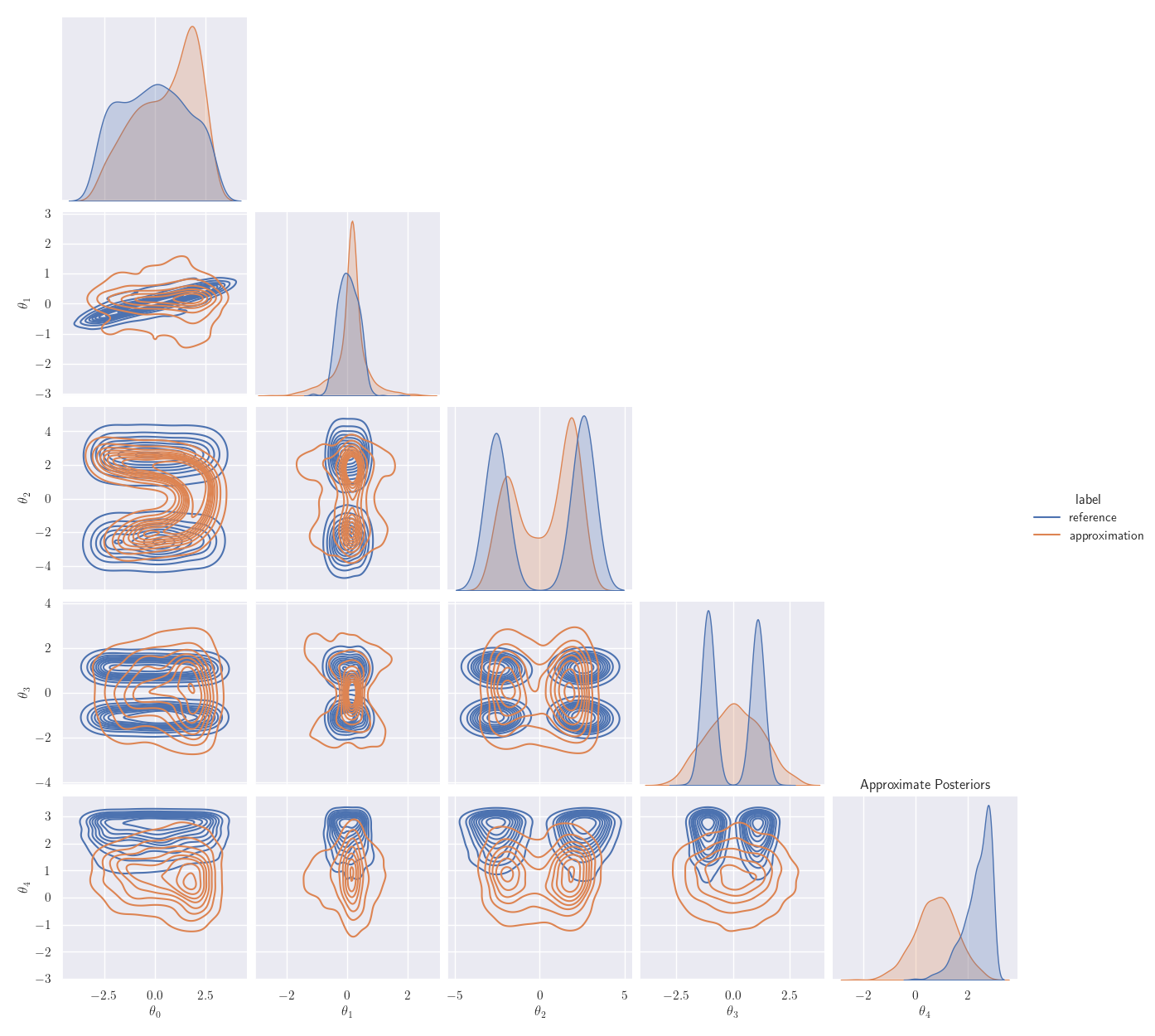}
\centering
\end{figure}


\subsection{Bernoulli GLM}
\begin{figure}[H]
\includegraphics[scale=0.1]{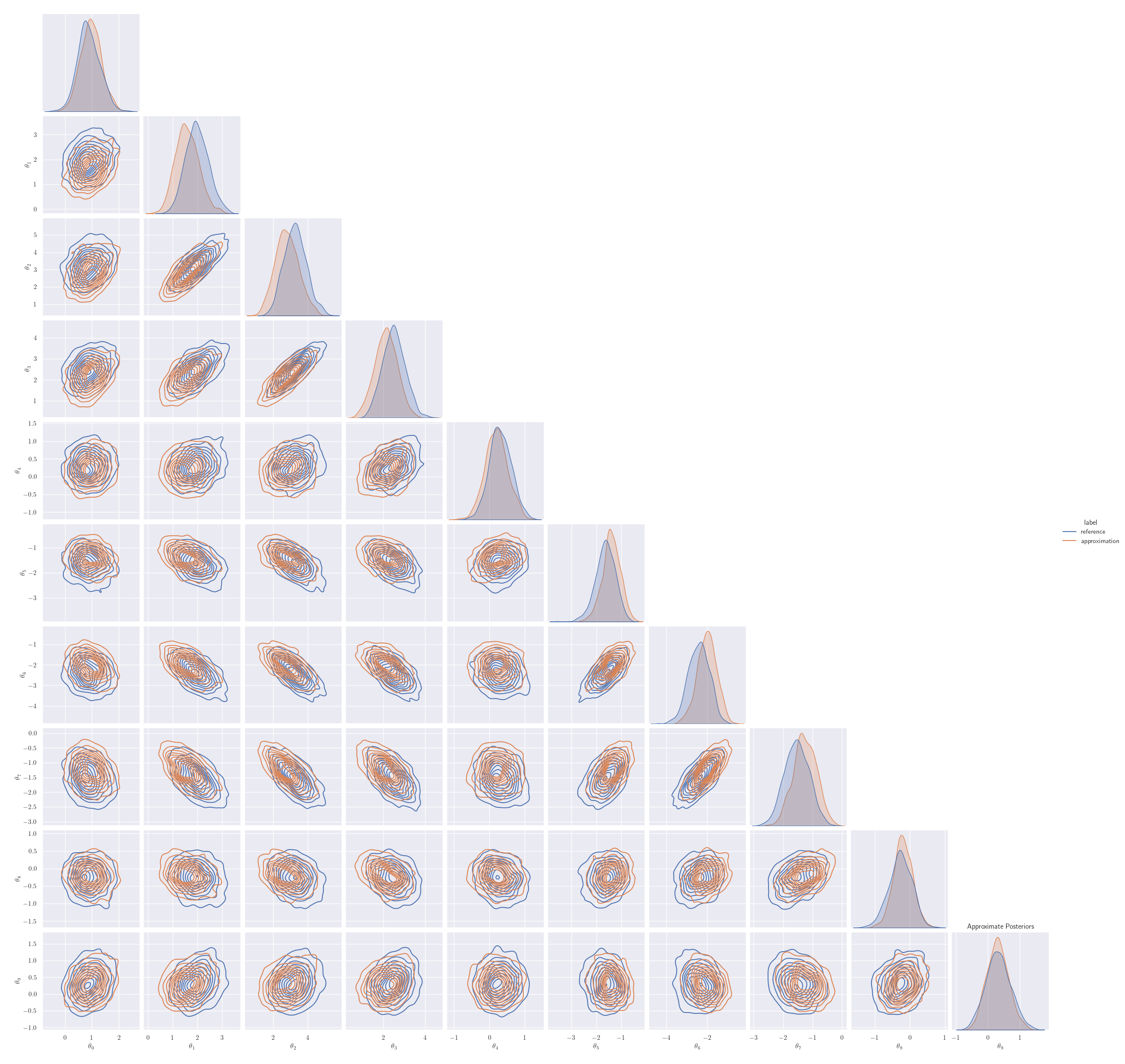}
\centering
\end{figure}





\section{SBI Representative Points}\label{appendix:sbi_rps}

\subsection{Gaussian Mixture}
\begin{figure}[H]
\includegraphics[scale=0.6]{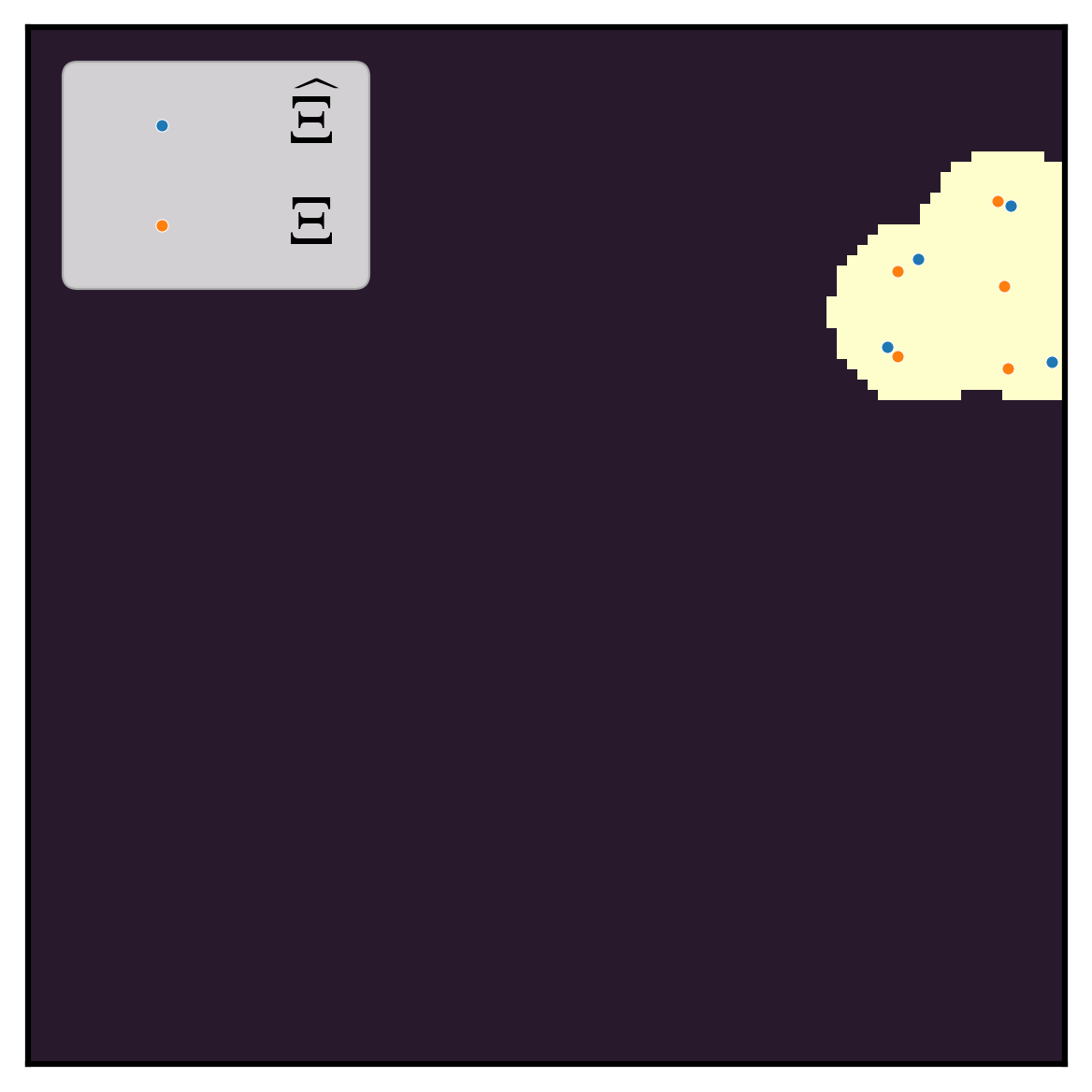}
\centering
\end{figure}

\subsection{Two Moons}
\begin{figure}[H]
\includegraphics[scale=0.6]{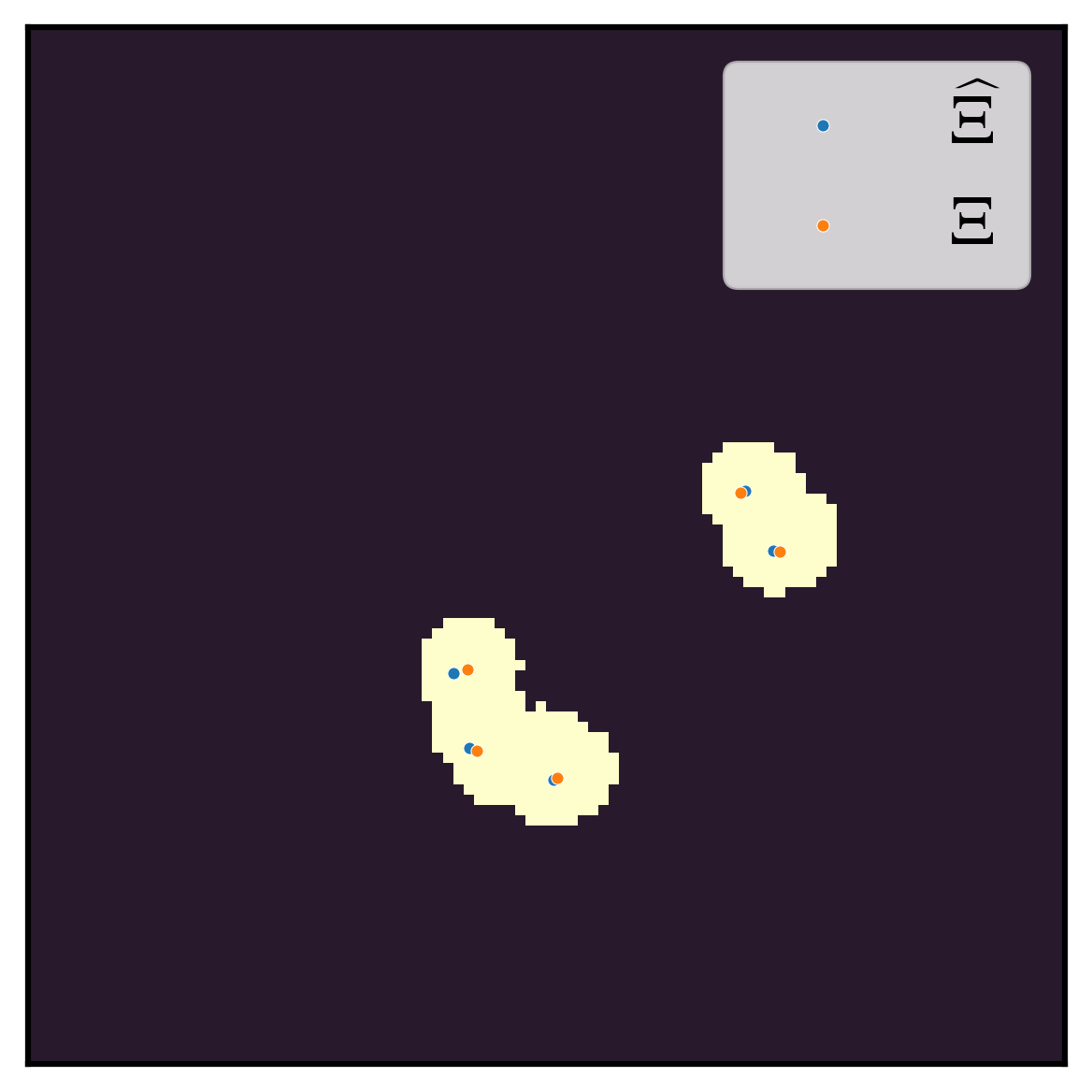}
\centering
\end{figure}



\section{Robust Vehicle Routing Setup}\label{appendix:experiments_routing}
The routing graph of Manhattan was extracted using OSMnx, with local highway speeds extracted using OpenStreetMap \citep{boeing2017osmnx}. Highway speed imputation was performed on edges where such information was not available, specifically by averaging over those highways of comparable categorization, namely ``residential,'' ``secondary,'' or ``tertiary.'' Doing so defined a nominal travel cost $\widetilde{c}$.

We now wish to modify these nominal travel costs to account for the weather predictions made upstream. That is, we wish to account for the precipitation map $\widetilde{Y}\in \mathbb{R}^{W\times H}$ in these edge weights. To do so, we use the global coordinates $(c^{v}_x, c^{v}_y)\in\mathbb{R}^2$ of each $v\in\mathcal{V}$ to find the precipitation at the corresponding location. Concretely, we determine the pixel coordinate by scaling the coordinate to the range of the region that was forecasted. So, for a forecast over the window $(c_x^{\min}, c_x^{\max})\times (c_y^{\min}, c_y^{\max})$, the corresponding pixel lookup is:

$$
p^{v}_x = \floor{\frac{c^{v}_x - c_x^{\min}}{c_x^{\max} - c_x^{\min}}} \times W
\qquad
p^{v}_y = \floor{\frac{c^{v}_y - c_y^{\min}}{c_y^{\max} - c_y^{\min}}} \times H.
$$

\begin{figure*}
\centering
\includegraphics[scale=0.45]{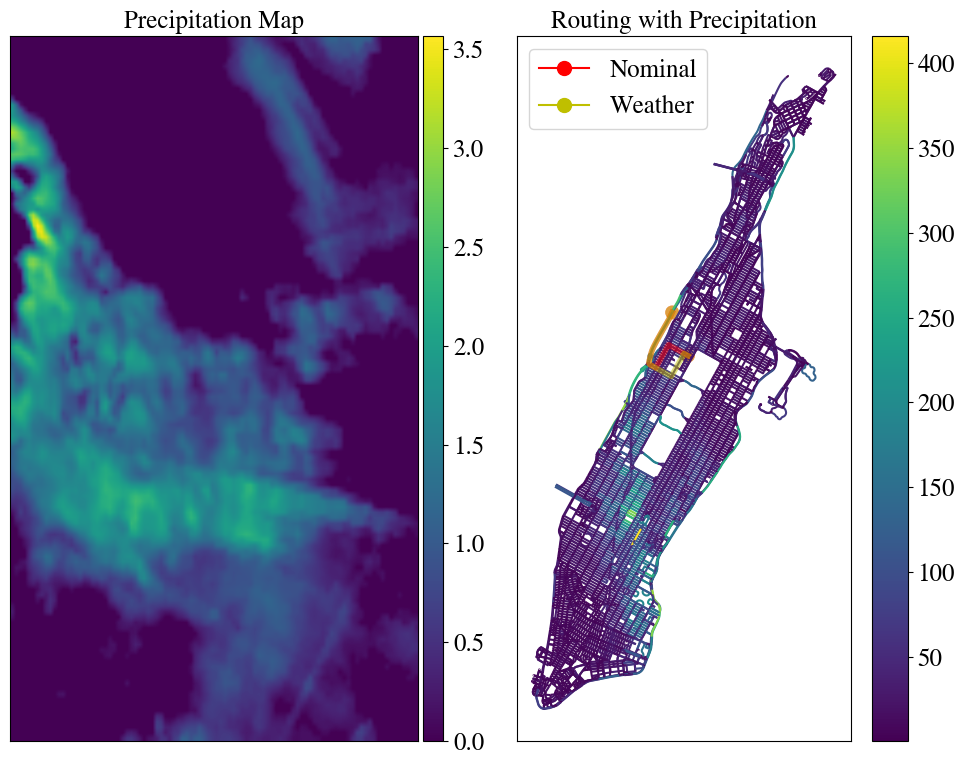}
\caption{\label{fig:example_weather_pred} Precipitation maps (left) are converted to edge weights (right) as per Equation \ref{eqn:forecast_to_weight}. Solving the shortest paths problem (SPP) on this newly weighted graph, therefore, can produce distinct routes from that based on the nominal travel-time SPP, as highlighted by the two distinct paths under the nominal and weather-weighted graphs on the right.}
\end{figure*}

The corresponding precipitation associated with each vertex, therefore, is $\widetilde{Y}_{p^{v}_x, p^{v}_y}$. We define the final travel cost for each edge $e\in\mathcal{E}$ with endpoints $(e_s, e_t)$ as:

\begin{equation}\label{eqn:forecast_to_weight}
    c_{e} := \widetilde{c}_{e} \cdot \exp\left\{\frac{\widetilde{Y}_{p^{e_s}_x, p^{v}_y} + \widetilde{Y}_{p^{e_t}_x, p^{e_t}_y}}{2}\right\}.
\end{equation}

We then solve SPP on the weighted directed graph with edge weights $c_{e}$. An example of this weighting and the corresponding shortest path is illustrated in Figure \ref{fig:example_weather_pred}. 

\end{document}